\definecolor{darkred}{rgb}{0.7, 0.0, 0.0}
\definecolor{darkgreen}{rgb}{0.0, 0.7, 0.0}
\definecolor{grafhed}{rgb}{0.0,0.545,0.545}
\providecommand{\graf}[1]{}
\newcommand{\E}{\mathbb{E}}
\newcommand{\Mid}{\,\middle|\,}  
\newtheorem{lemma}{Lemma}
\newtheorem{assump}{Assumption}
\newtheorem{definition}{Definition}
\title{Disinformation, Stochastic Harm, and Costly Effort: A Principal-Agent Analysis of Regulating Social Media Platforms}
\author{%
  Shehroze Khan and James R.\ Wright
%   \thanks{Use footnote for providing further informationabout author (webpage, alternative address)---\emph{not} for acknowledgingfunding agencies.}
\\
  Department of Computing Science \\
  Alberta Machine Intelligence Institute (Amii)\\
  University of Alberta\\
  %Edmonton, AB, Canada \\
  \texttt{\{shehroze,james.wright\}@ualberta.ca} \\
  % examples of more authors
  % \AND
  % Coauthor \\
  % Affiliation \\
  % Address \\
  % \texttt{email} \\
  % \And
  % Coauthor \\
  % Affiliation \\
  % Address \\
  % \texttt{email} \\
  % \And
  % Coauthor \\
  % Affiliation \\
  % Address \\
  % \texttt{email} \\
}
\begin{document}

\maketitle

\begin{abstract}
    The spread of disinformation on social media platforms is harmful to society.
    This harm may manifest as a gradual degradation of public discourse; but it can also take the form of sudden dramatic events such as the 2021 insurrection on Capitol Hill.
    The platforms themselves are in the best position to prevent the spread of disinformation, as they have the best access to relevant data and the expertise to use it.
    However, mitigating disinformation is costly, not only for implementing detection algorithms or employing manual effort, but also because limiting such highly viral content impacts user engagement and thus potential advertising revenue.
    Since the costs of harmful content are borne by other entities, the platform will therefore have no incentive to exercise the socially-optimal level of effort.
    This problem is similar to that of environmental regulation, in which the costs of adverse events are not directly borne by a firm, the mitigation effort of a firm is not observable, and the causal link between a harmful consequence and a specific failure is difficult to prove.
    For environmental regulation, one solution is to perform costly monitoring to ensure that the firm takes adequate precautions according to a specified rule.
    However, a fixed rule for classifying disinformation becomes less effective over time, as bad actors can learn to sequentially and strategically bypass it.
    Encoding our domain as a Markov decision process, we demonstrate that no penalty based on a static rule, no matter how large, can incentivize optimal effort. Penalties based on an adaptive rule can incentivize optimal effort, but counterintuitively, only if the regulator sufficiently \emph{overreacts} to harmful events by requiring a greater-than-optimal level of effort. We offer novel insights for the effective regulation of social platforms, highlight inherent challenges, and discuss promising avenues for future work.
\end{abstract}

\section{Introduction}
\label{sec:intro}
%This section motivates the problem, discusses why it is important to limit the spread of hoax content on social platforms by referring to recent events...

Contemporary web and social media platforms provide a ripe ground for the spread of false news, hoaxes, and disinformation \cite{kumar2018false}.
Compounding the problem, social platforms' business models often conflict with efforts that can mitigate these problems. Facebook, for instance, uses machine learning models to maximize user engagement; however, in doing so, these models can also favor content that is toxic and filled with conspiracy, lies, and misleading, divisive information \cite{hao_2021, avaaz-blm, buzzfeed-us-elec}.

\graf{Disinformation definition and examples of harm}
We use \emph{disinformation} to refer to all such toxic content, including all kinds of false and fabricated news posing as truth, created with the intention to mislead \cite{kumar2018false}.
The unmitigated spread of disinformation is harmful to society. The harm can be direct physical or emotional distress to an individual; it can also manifest as a negative externality affecting public discourse, or social welfare. Examples include the undermining of public health response due to Covid-19 false rumors \cite{who-covid,avaaz-covid}, disease outbreaks due to anti-vaccination propaganda \cite{wp-antivaxx}, violent conspiracy movements surrounding the 2020 US presidential elections \cite{avaaz-us-elec,fb-capitol-attack}, and horrific incidents such as the Pizzagate shooting \cite{wp-pizzagate} and ethnic violence in Myanmar \cite{nyt-myanmar}.

\graf{Stochastic harm means platforms have no incentive to self-regulate}
The costs of these rare and dramatic events are borne exclusively by society, rather than the social platforms themselves. These events are inherently stochastic as it is impossible to predict with certainty that a given collection of content will cause a specific harm. Furthermore, deploying techniques to prevent the spread of disinformation is costly: filtering, demoting or assigning warning labels to associated content comprises both the direct costs of implementing classification algorithms or employing manual detection effort, and also the indirect opportunity costs of advertising revenue due to subsequent losses in user engagement \cite{nyt-fb, buzzfeed-fb-scam-profit}. Platforms such as Facebook and Twitter face no compelling incentives to prevent the spread of disinformation; thus, relying on platforms to police themselves will not work \cite{frank_law}. The only reason for a profit-motivated platform to control the spread of disinformation is to avoid penalties imposed either by users or a public regulator.

\graf{Incentive issue feeds into the technological issue}
Predicting whether a specific piece of content will cause some real-life harm is extremely hard. Indeed, disinformation remains a problem due to the inherent challenge of developing the technological tools required to effectively detect and mitigate it. Yet, solving this technical problem alone will not be sufficient in controlling the harm from the spread of disinformation. It is equally crucial to solve the incentive problem described above, as this will likely pose as an obstacle to solving the technical problem: if social platforms profit from the virality of disinformation due to their engagement-centric business model, and if platforms face no direct consequences for any resulting harm, platforms will not be compelled to solve the technical problem of disinformation \cite{disinfo-profit}. The incentive problem feeds back into and exacerbates the technological challenges of mitigating disinformation on social platforms.

\graf{What is different about this problem? Describe AI...}
Techniques for mitigating disinformation must leverage tools in artificial intelligence (AI), which further complicates the issue of misaligned incentives. The sheer scale at which users generate and share content on social platforms mean that any form of content moderation must rely, to some degree, on the automation afforded by AI in order to handle the vast volume of data. This aspect is different from traditional publishing, television, and print media where humans are involved in the editorial feedback loop before any content is allowed to be published. The problem of assigning liability for content is therefore much simpler in traditional media. With user-generated content on social platforms, however, the quality and accessibility of data determine whether AI will be effective at moderating content. But because only platforms have full, real-time access to their data, the problem is thus to motivate their use of AI to proactively mitigate disinformation --- in spite of their self-interest in not doing so \cite{frank_law} --- without having the same expertise or access to data.

\graf{Principal-agent models for disinformation}
The \emph{principal-agent framework} of microeconomics models the interactions between an \emph{agent}, who can influence the probability of an outcome by incurring costly effort, and a \emph{principal}, who has preferences over the outcome. We model the domain of disinformation prevention through this lens, with the platform as an agent who has the ability but not the incentive to undertake costly precautions against the spread of disinformation, and a regulator as the principal who seeks to balance the cost of the precautions against the harm caused by disinformation.

\graf{Key assumptions}
We begin by reviewing related work on techniques for mitigating disinformation and background on the principal-agent approach to modeling the regulation of stochastic externalities in Section~\ref{sec:rel-work}.
We then lay out our modeling assumptions in Section~\ref{sec:model-bg}.
We make three key assumptions. First, any attempt to regulate the actions of a platform before harm occurs (using a so-called negligence standard) must specify what level of effort for mitigating disinformation is adequate. Even if this specification is left implicit, we can model it as if it were an explicit public standard, set by the regulator, requiring some level of effort from the platform. Second, any given public standard will, in practice, require less effort from the platform over time as disinformation authors can learn to circumvent it; the platform is thus able to get away with expending less effort policing disinformation in order to save on costs, since the data and expertise needed to continuously re-train a model of content harmfulness is possessed by the platforms but not the regulators. Third, the public standard for content that ought to be prohibited on the platform will increase after a harmful event.

\graf{Descriptive MDP and our formal results}
We formalize these assumptions as a Markov decision process (MDP) in Section~\ref{sec:formal-model}, and use the model to derive our main results. We show that that no level of fines based on a fixed public standard can induce the socially-optimal effort for mitigating disinformation. However, in the presence of a public standard that reacts to a harmful event via an increase in the required level of effort, the platform's individually-optimal level of effort may exceed the level currently required by the public standard. In particular, the platform may be incentivized to continue exerting effort at a specific threshold when the public standard becomes less stringent over time. However, perhaps counterintuitively, this effort threshold will fall short of the socially-optimal level unless the public standard sufficiently \emph{overreacts} as a response to any harmful event, by requiring a level of effort that is greater than socially optimal.

\graf{Impossibility result with new MDP setup}
Finally, to further demonstrate the complexity of this incentive problem, we show that even under a simpler, more stylized setting --- where the regulator has the same technical ability as the platform and the costs of harm from disinformation are known --- absent knowledge of the platform's costs of effort, there is no specification of the public standard's required effort that will always incentivize the socially-optimal level of effort.
We thus conclude that the design of mechanisms that may elicit the costs of foregone engagement incurred by the platform in policing disinformation is one of the promising directions for future work.

\section{Background and Related Work}
\label{sec:rel-work}

\subsection{Fighting Falsity Online}
    %% Work / techniques relating to detecting disinformation
    \paragraph{Detecting disinformation.} A popular approach towards limiting online disinformation is to develop tools or frameworks that are effective in detecting associated content. This process aims to identify disinformation in its initial stages so that mitigating efforts thereafter may restrict or eliminate exposure to users of social media. \textcite{fake-news-detection-survey} survey some techniques that make false news detection efficient and explainable. These techniques are categorized into four areas: knowledge-based methods that involve fact-checking, style-based methods that focus on studying linguistic features of false content, propagation-based methods that analyze how such content spreads in the social network, and source-base methods that investigate the credibility of sources that generate false news. The goal of studying these and other characteristic features of the false news ecosystem, such as those surveyed by \textcite{kumar2018false}, is to develop algorithms and tools for early detection.
    
    %% Give example of knowledge based methods... manual fact-checking, but also automatic due to volume of content
    Knowledge-based methods mainly involve fact-checking, which in turn can be either manual or automatic. Fact-checking is the process of extracting claims made in a given piece of content that is to be verified and checking these against known facts \cite{fake-news-detection-survey}. Manual fact-checking can either be crowd-sourced from users on social platforms, similar to Facebook or Twitter provisioning its users with the ability to report hoax content \cite{fb-fact-checking,twitter-fact-checking}; or it can also be conducted via third-party websites such as \textit{Snopes}\footnote{\url{https://www.snopes.com}}, \textit{PolitiFact}\footnote{\url{https://www.politifact.com}}, or \textit{FactCheck}\footnote{\url{https://www.factcheck.org}} that employ domain experts dedicated to serving the public by debunking disinformation.
    
    %% Propagation example. The twitter paper example...
    While knowledge-based and style-based detection techniques focus on analyzing the textual content of disinformation --- so that predictive classifiers might be trained to readily and effectively flag false news --- propagation-based techniques study how such content disseminates amongst users in a given social network. \textcite{twitter-false-news-spread} have conducted an empirical study of tweets on Twitter to analyze the differences between the spread of true and false news stories. It is shown that false news stories travel faster, farther, and more widely than true news stories \cite{twitter-false-news-spread}. Such an analysis not only aids in investigating the causes and consequences of disinformation proliferation, but it also helps formulate propagation-based false news detection as a classification problem.
    
    Our work starts from the assumption that the platform has the ability to detect and limit the spread of objectionable content \cite{fb-ai}. Our focus instead is on modeling the incentives faced by the platform to \emph{not exercise this ability.}
    
    %% Techniques relating to mitigating disinformation after it has spread
    \paragraph{Mitigating the effects of disinformation.} Once effective technology for detecting disinformation content and diffusion networks is implemented, the next step is to mitigate or limit the impact such content may have on users. A straightforward approach is to simply remove associated content from the platform entirely; another is to demote or down-rank content so that it is less likely to be served on users' feeds. These approaches rely on platforms to undertake action to reduce the spread of disinformation since the recommendation algorithms serving content to users are proprietary. However, there are also studies conducted by third-party researchers offering other solutions for mitigating the effects of disinformation once it has entered the social network.
    
    %% Cognitive psych mitigation
    One such technique draws from concepts in human cognitive psychology to study deception cues that influence users' decision-making process related to sharing content in the social network \cite{cogsci-detection}. The goal here is to provide users with informative cues so that they are less likely to share disinformation.
    %% Facts before rumors campaign
    Another, more proactive intervention is the ``Facts Before Rumors'' campaign \cite{fax-b4-rum}, where the focus is to preempt the kinds of rumours that are likely to spread on a social network --- based on user locations and localized news content, for example --- and counteract these in advance by employing certain users to spread truthful news.
    %% Avaaz, Correct the Record
    Other interventions focus on curing the effects of disinformation instead of preventing it initially. For example, the ``Correct the Record" initiative proposes a visual correction that may be sent to users exposed to false content on Facebook \cite{avaaz-ctr}.
    
    Again, our work focuses less on advocating a particular mitigation technique; rather, our goal is to determine the conditions under which platforms can be induced to actively implement any such technique to prevent harm from the spread of disinformation.
    
    \subsection{Hidden-Action Principal-Agent Model}
    %% PA Framework: Intro and Examples
    Many economic interactions involve two parties, a principal and an agent, where the agent's choice of action imposes some form of (negative or positive) externality on the principal. In most realistic scenarios, the principal cannot directly monitor or observe the agent's action, but instead only observes a stochastic outcome resulting from it. For example, in the interaction between a property insurer (principal) and a property owner (agent), if the insurer bears the costs of any damages to the property, the owner might not be incentivized to maintain it and might engage in risky behaviors (e.g., leave the kitchen unattended while cooking). This situation exemplifies the problem of \textit{moral hazard}, which is an important feature of the principal-agent interaction because it precludes straightforward incentive schemes. Many employment settings also share this characteristic. For example, the CEO (principal) of a small startup company --- whose income is directly related to the company's growth and product sales --- would want their employees (agent) to undertake effort that profits the company (e.g., a UI/UX developer improving the company's website leading to increased traffic and sales). But if the employees are simply compensated at a fixed hourly rate, they might not be incentivized to put in their best effort to benefit the company.
    
    %% Elaborate on moral hazard and asymmetric information
    Naturally, it will be in the principal's interests to influence the agent's choice of action. The principal may therefore be invested in drafting a contract for such influence in order to guard against the problem of moral hazard \cite{inbal-contract-1}. The need for a contract arises due to information asymmetry between the two parties --- i.e., the agent has more information or expertise about their actions than the principal. For property insurance, the hidden information is the agent's act of not maintaining the property and engaging in some risky behavior; for the startup company example, the hidden information is the UI/UX developer's expertise in developing clean, functional websites; whereas for our setting, AI is the source of asymmetric information: only platforms possess the expertise, models and data to promptly flag and mitigate disinformation.
    
    %% Describe how AGT community looks at contract theory (Inbal and Dutting)
    %Talk about how their work views the design of contracts through the lens of theoretical computer science
    \subsection{Contract Theory Meets Computer Science}
    The principal-agent model is central to \textit{contract theory}, which is an important field in microeconomics. This area has recently gained traction in the algorithmic game theory community, primarily through works such as \cite{inbal-contract-1,inbal-contract-2,duetting2021combinatorial}, where the aim is to concisely represent principal-agent settings and computationally characterize the design of \textit{optimal} contracts\footnote{An optimal contract is one that maximizes the principal's expected reward assuming that the agent best responds to the contract \cite{duetting2021combinatorial}.} permitted by such settings.
    
    % And now contrast this work with what we're doing: looking at penalty contracts for regulating stochastic externalities
    Our work is similar to these studies in that we consider the optimal design problem of maximizing the utility of the principal, who in our setting is a social welfare-maximizing regulator. Yet, instead of a computational complexity analysis, we represent disinformation prevention as a principal-agent problem through our descriptive MDP model, which to our knowledge is a unique approach towards modeling the incentives faced by social platforms pertaining to the mitigation of false news and other toxic content. Unlike those cited works, the outcome space for our setting is simply the realization of a single harmful event due to the unmitigated spread of disinformation; our focus as such is specifically on the design of penalty contracts or schemes enforced by some regulatory agency in order to contain this stochastic externality, or harm from disinformation.
    
    \subsection{Regulating Stochastic Externalities}
    The hidden action principal-agent model can also be applied to the regulation of firms that generate stochastic externalities as a result of their operations. Examples include harmful accidents such as medical product failures, oil spills, nuclear waste leakages and other forms of pollution \cite{INNES1999181}. Moral hazard exists in these settings because firms (agent) might not be incentivized to take a costly precaution (unobservable action) to reduce accident risk, which is where a regulatory authority (principal) steps in to specify a penalty contract to guarantee some form of enforcement.
    
    %% Discuss standard approaches to regulating stochastic externalities... 
    \textcite{oilspill} explores optimal enforcement strategies for the regulation of firms that stochastically pollute the environment in the form of oil spills. It is shown that under a \emph{strict liability standard}, where a polluting firm is always penalized if an oil spill occurs regardless of its level of precautionary effort, the firm can be induced to exercise the socially-optimal or \textit{first-best} level of effort. However, this requires that a specific firm can be identified as being responsible for a spill.
    
    When a strict liability standard is impractical --- for example because the perpetrator of harm cannot be reliably identified --- a regulator might prefer to expend resources to monitor a firm's effort directly.
    In these situations, a \emph{negligence standard} can be preferable, in which a firm is not held responsible for an accident if it can demonstrate that it took adequate precautions.
    Naturally, the quality of information available for regulatory monitoring is a consideration for enforcing such a standard \cite{shavell-risk-sharing-pa}.
    
    %% Similarities and differences to oil spill / environmental regulation
    Our domain shares many of the features of the oil spill prevention domain: there are stochastic externalities associated with the spread of certain kinds of content on social platforms (harm from disinformation), as there are with firms transporting oil (oil spills); the likelihood or severity of such harm may be reduced to some degree if platforms exercise responsible and proactive content moderation, but not completely eliminated as the harm is ultimately a direct outcome of individual actions --- akin to a spill that occurs because of inclement weather and not due to the oil tanker being faulty.
    
    However, our domain is also importantly different from that of oil spill regulation, or environmental regulation more generally. The following subsection expands on these differences.  In Section \ref{sec:model-bg} that introduces our formal, descriptive model, we will elaborate on the similarities and highlight how these key differences prevent the application of standard enforcement strategies.
    
    %% Why social media platform setting is different?
    % Talk about how we're different from pollution regulation in that we're regulating "social media platforms" ==> here talk about what's different about this setting
    \subsection{Why Online Disinformation is Different}
    %% Editorial control is different to traditional media; thus, liability rules are also different
    Disinformation prevention via regulatory mechanisms has its own unique challenges. First, there are ongoing debates around assigning liability for content hosted by social platforms \cite{frank_law}, particularly due to editorial control being different for the social media setting. As discussed previously, it is infeasible to implement human-in-the-loop feedback for every piece of real-time, user-generated content shared on online platforms, as this medium is unlike traditional forms of media; there exist as such not only the issue of scalabilty for any disinformation mitigation technology, but also the question about whether similar liability rules for harmful content should apply to social media as they would for traditional media.
    
    %% AI must be utilized to handle the vast volume of data
    Second, in order to handle the vast volume of content, AI must be utilized for the proactive and automated flagging of disinformation. This aspect complicates regulation because the data powering such AI is only accessible to the social platforms themselves. Moreover, the recommendation algorithms that filter and serve content to users are also proprietary. Therefore, unlike for the environmental regulation domain, mandating exact precautions against the spread of disinformation for social platforms is likely to be an involved process for any regulatory authority --- especially in comparison to, for example, specifying precise conditions that render an oil tanker safe for the transport of oil, or promoting adequate technology that will reduce emissions causing air pollution.
    
    %% Disinformation authors as bad actors in the interaction
    Third, and also different from pollution regulation, there exists the issue of malicious actors responding strategically to any explicitly fixed rules or precautions against the spread of disinformation. Authors and purveyors of disinformation are constantly coming up with new, sophisticated methods to ensure that their fabricated stories disseminate online: techniques include obfuscation strategies to hide disinformation propagating networks and the origins of propagandist content; and also changing the content itself via constructing new falsehoods, or targeting different groups \cite{evolving_disinfo, hao_2021}. Any successful attempts to moderate such users or content at scale must therefore utilize all the technical expertise and data required to counteract efforts of these bad actors. Regulation becomes challenging because only social platforms have access to such resources and data, and they are not necessarily incentivized to undertake action at the expense of losses in user engagement \cite{nyt-fb,frank_law}.
    
    \subsection{Mechanism Design}
    %% Introduce Mech Design
    % Talk about what it is, how it's different from PA stuff in terms of where the info asymmetry lies
    Another closely related body of work is the economic theory of mechanism design, where the goal is to design protocols or procedures that mediate interactions between strategic agents in order to achieve some desired objective. Naturally, the outcome is subject to the constraint that agents behave selfishly, in that they act according to their rational self-interests; and also that agents hold some private information, i.e., their \textit{hidden types}. A mechanism seeks to attain the desired outcome by incentivizing agents to report their private types. Mechanism design theory contrasts with the standard principal-agent model with respect to where the information asymmetry exists: it is the agents' type information that is hidden from the mechanism designer; whereas, for the principal-agent model, the principal cannot directly observe an agent's action(s), which form(s) the source of asymmetric information.
    
    %% Mech design for pollution reg
    % Cite "Mech Design for the Environment." Say why standard mech design seems like it could work, but it actually doesn't work for our domain because we can't observe agent's efforts
    Because our work is concerned with setting up a regulatory policy in order to achieve a desired social outcome --- that is, the socially-optimal level of control of disinformation --- mechanism design is a pertinent framework for our domain. Yet, it does not directly apply to our setting 
    since the regulator cannot reliably observe a social platform's efforts, or action, to curb the spread of harmful content. We therefore utilize the principal-agent framework to model the regulation of disinformation. The goal for the regulator (principal) is to incentivize a platform (agent) to use its proprietary expertise and AI technology --- which are not available to the regulator --- to responsibly limit toxic and harmful content in order to control the harm from disinformation. The following section introduces our formal descriptive model.

\section{Modeling the Regulation of Disinformation}
\label{sec:model-bg}
%This section develops the narrative and justifies our choice of model
\graf{Domain similarity}
%Discuss how a regulatory PA framework applies to our domain. What is similar about our domain in comparison to say, oil spills, or other forms of unilateral accidents?

%Introduce model variables and FB's model
We have the following scenario: A regulator (principal) would like the platform (agent) to limit the amount of disinformation spread to control the likelihood of harm, which is a stochastic and observable event. The underlying assumption is that the unmitigated spread of disinformation on social platforms makes the occurrence of harm more likely.

\graf{The platform is able to detect potentially harmful content accurately}
We assume the platform possesses a proprietary classification model that accurately assigns for every a piece of content the probability of it causing harm \cite{fb-ai, nyt-fb}. Thus, extremely violent, graphic, or objectionable content, which contains nudity, racism, child pornography, or any form of human/animal abuse, is tagged by the model with a very high harm probability value. Other, benign forms of content, such as cute photos of pets or birthday greetings, are assigned with a very low harm probability value.

\graf{What we mean by effort}
We summarize all of the measures that a platform takes to mitigate harmful content such as disinformation as the ``effort'' expended by the platform. This effort includes both changing moderation rules and measures (such as automated detection of harmful content) for their enforcement. We model  effort by assuming that the platform picks a harm probability threshold (e.g., by specifying content moderation rules) that represents the platform's tolerance for hosted content --- that is, all content whose harm probability value exceeds this threshold is considered by the platform as being unacceptable and in violation of its community standards of acceptable postings.
Thus, the platform's effort includes both detecting  such content, and thereafter employing techniques to mitigate it (e.g., via enforcement of rules).

These mitigation techniques could include filtering content entirely, downgrading it so that it appears on fewer user feeds, or labeling it with a warning invoking users' discretion. As discussed previously, the exact choice of technique is not important for this analysis; any and all such methods effectively count as the platform exercising effort to prevent the spread of harmful content and, by extension, disinformation.

\graf{Interpreting Effort}
\paragraph{Interpreting Effort.} High effort can be interpreted as the platform proactively updating its rules to retrain its model for the automated detection of new forms of harmful content, and perhaps also employing manual effort in tandem to responsibly moderate and control the spread of such content. Conversely, low effort may be thought of as the platform being lax about enforcing its moderation rules, and perhaps even as not updating these rules to preempt the spread of toxic content. In the context of our modeling, therefore, a low choice of threshold implies stricter content moderation rules and thus high effort on part of the platform as more content items will be flagged; and a higher threshold indicates laxer content moderation rules and thus lower effort as fewer content items will be flagged by the platform.

\graf{The model and its various params}
Let $H$ be a binary random variable indicating whether harm occurs with density function $h(e)=\Pr[H \mid e] \in (0,1]$ representing the probability that harm occurs if the platform exerts effort $e$.
\graf{Harm probability} Similar to \cite{oilspill}, we assume that although the platform is unable to control this externality directly, the platform can make it less likely for harm to occur  by exercising more effort. In line with the standard economic model of unilateral accidents \cite{INNES200429,INNES1999181,unilateral-model-ambiguity}, we assume that there are diminishing returns to effort; that is, effort reduces risk of harm at a decreasing rate: $h'(e) < 0$ and $h''(e) \ge 0$.\footnote{Though the cited studies assume strict convexity of harm function, i.e., $h''(e) > 0$, our results are robust towards slightly relaxing this assumption.} 

\graf{Effort is costly}
The business model of most social platforms is primarily based on advertising. For instance, advertising accounted for 98\% of the Facebook's \$86 billion revenue in 2020 \cite{fb-revenue}. Essentially, platforms monetize users' attention by optimizing their engagement for content, a subset of which includes disinformation. Thus, in addition to direct costs, limiting disinformation is also costly for platforms in terms of these indirect costs of losing potential ad revenue. Let $c(e)$ denote the cost of exerting effort $e$. We assume effort is increasingly costly; i.e., $c'(e) > 0$ and $c''(e) > 0$, which is also standard under the unilateral accident model.

\graf{Policy goal}
Given their behavioral advertising business model, platforms face no incentives to moderate attention-grabbing content, toxic or otherwise, especially because they do not directly incur the costs of any societal harm \cite{nyt-fb, frank_law}. Under this scenario of misaligned incentives, a social welfare-maximizing regulator aims to incentivize the platform to exercise adequate precautions against the spread of disinformation. Concretely, the regulator wishes to maximize the expected social welfare,
\begin{equation}
    \label{eq:social-welfare}
     EW(e) =  -h(e)D-c(e),
\end{equation}
where $D$ is the cost of damages as a result of any harm due to disinformation, assumed to be constant here for simplicity.

\graf{What is e*}
The socially-optimal or first-best effort maximizing ($\ref{eq:social-welfare}$) is given by $e^* = \arg\max_e EW(e)$. At $e^*$, the sum of the total expected costs of harm, or $h(e^*)D$, and the platform's costs of exerting this effort, or $c(e^*)$, is minimized; thus, $e^*$ by definition is the platform's precautionary effort at which the cost of any additional effort is balanced by the expected cost of damages due to harm.

We now discuss possible methods by which the platform may exert effort $e^*$, and further expand on domain specific features for our descriptive model.

\subsection{Strict Liability}

Under the strict liability standard, the platform is held completely liable for any harmful event, irrespective of its precautionary effort. To incentivize the first-best level of effort $e^*$, the strict liability fines $T$ must equal $D$, the societal cost of harm \cite{oilspill}; thus, the platform's expected utility is given by,
\begin{equation}
    \label{eq:fb-sl}
    EU(e) = -c(e)-h(e)T,
\end{equation}
which equals the expected social welfare equation (\ref{eq:social-welfare}).

\graf{Pros and Cons of SL}
%Discuss relative merits of SL
%Pros = can't accurately monitor effort, also monitoring may be difficult / costly;
A regulator might pick this enforcement standard because it does not require expending resources to monitor the platform's effort, which is only imperfectly observable because of the difficulty in identifying the exact mechanics of the platform's proprietary algorithms. A strict liability standard is effective for controlling the spread of unambiguously harmful content such as child pornography, since such content directly constitutes harm. However, strict liability for regulating disinformation might be impractical for a few reasons.

Most importantly, the direct causal links between any harmful event due to the spread of disinformation and the platform are sufficiently loose for this standard not to work, since the perpetrators are ultimately individuals; the platform can claim plausible deniability, or point to efforts at prohibiting dangerous content after the harm has already occurred, akin to when Facebook and Twitter banned groups like "QAnon" or "Proud Boys" after the insurrection on Capitol Hill \cite{cnbc-fb-coo}. Essentially, unlike explicitly toxic content such as child pornography, it is difficult to determine what constitutes direct harm for disinformation. Furthermore and relatedly, strict liability is unlikely to work in practice because it is also difficult to estimate $D$ a priori, as this harm could manifest in different forms.

\subsection{Negligence}

Under this standard, a regulator must specify a duty of care that the platform must follow in order to avoid liability for any harm. Monitoring the platform's effort is thus necessary to determine liability.

\graf{Monitoring and descriptive case}
Although monitoring is imperfect, the platform's content moderation efforts are not completely unobservable: there exists a crude public notion about the kinds of content that ought to be limited on social platforms. From an incentive standpoint, a negligence standard already exists in the sense that there is not a lot of nudity or child pornography, or content with explicit death threats, vile or racist remarks on most social platforms --- platforms like Facebook and Twitter expend ample resources to enforce their community standards via active content moderation \cite{fb-ai, fb-comm-std}. Presumably, platforms do not want public outrage, or to be charged with trafficking or any other forms of liability for such content, which if not controlled would be reported extensively in popular press.

\graf{Public model}
We model this descriptive situation with the presence of an explicit public standard, operated by a regulator, that fixes a required level of precautionary effort for mitigating disinformation. In reality, there is no concept of an explicit public standard specifying effort, but rather an implicit public notion about the types of content that ought to be moderated by the platform. Nonetheless, regardless of what the public standards for content are at any given moment, these standards imply a certain level of precautionary effort, which we encode with the presence of an explicit standard to simplify our formal analysis.

\subsection{Performative Prediction of Disinformation}
\label{sec:performativity}
%This section expands on the model by introducing the concept of performative prediction, and describing what happens when harm occurs
When predictions about the actions of an agent change the outcomes for that agent, there is a risk that the predictive model will cease to be accurate \cite{perdomo2020performative}. We say that predictions that exhibit this problem are \emph{performative}.\footnote{Note that
we use the terms performative and performativity in a specific, strictly technical sense that differs from their colloquial usage.}
For example, a certain keyword that is extremely predictive of a message being spam may cease to be predictive once we filter based on it, as spammers will now have an incentive to stop using that keyword. Classifying disinformation is performative in this sense because bad actors can learn to bypass any detection model with new forms of disinformation \cite{evolving_disinfo, hao_2021}.

\graf{Justification for why the platform is better equipped than any regulator to detect and limit disinformation}
We assume the platform has sufficient technical resources and the data to retrain its proprietary model in order to counterbalance performativity; i.e., the platform is able to successfully classify future modifications of disinformation via predicting true harm probabilities of associated content. The same is not true for the regulator-specified public standard as the regulator does not possess the same expertise or access to data. The regulator in theory could utilize open-source, state-of-the-art disinformation detection learning models to effectively flag false content as not satisfying the public standard \cite{fake-news-detection1, fake-news-detection2}. Yet, to the extent that platform data is not completely accessible to the public \cite{fb-api-restrictions}, these open-source models will be susceptible to performative prediction of new, evolved forms of disinformation unless retrained with the same, easily accessible data that is available to the platform.

\graf{Decaying public model and ex post backlash}
%The public model establishes a standard of care that enables regulatory monitoring. 
Consequently, because it is publicly accessible, the public standard weakens over time due to performativity as disinformation authors strategically learn to circumvent it. We encode this feature effectively as a gradual \textit{downward drift} or decrease in the public standard's required effort if harm does not occur. However, if harm occurs, we see a public backlash in that the public's tolerance of content linked to the harmful event gets lower ex post. This is akin to when Facebook and Twitter began suspending accounts, content, and hashtags linked to the Capitol Hill riots \cite{fb-ex-post-filtering-qanon}. We encode this backlash as an effective increase in the public standard's required effort as a response to a harmful event.

\section{Formal Model}
\label{sec:formal-model}
We formalize our model as a MDP incorporating descriptive features of our domain as described in the previous section and defined by $(S,A,P_e,R_e)$ where
%\begin{itemize}
    %\item
    $S$ is the discrete state space of the current effort $e_c$ as specified by the public standard,
    %\item
    $A$ is the continuous set of actions representing the platform's choice of effort $e$,
    %\item
    $P_e(e_c, e_c') = \Pr[s_{t+1}=e_c' \mid s_t=e_c,a_t=e]$ is the transition probability to state $e_c'$ by exerting effort $e$ in state $e_c$, and 
    %\item
    $R_e=-c(e)$ is the immediate reward of exerting effort $e$, which is simply the cost of effort $e$.
    
    Consistent with MDP literature \cite{Sutton1998}, we use $\pi : S \rightarrow A$ to denote an arbitrary, deterministic policy specifying the platform's choice of effort $e \in A$ for all $e_c \in S$.
    The \emph{state value function} $v_{\pi}(e_c) = R_e + \gamma \E[v_{\pi}(s_{t+1})]$ is the expected discounted value of following policy $\pi$ from state $e_c$; the \emph{state-action value function} $q_{\pi}(e_c, e)=R_e + \E[v_{\pi}(s_{t+1})|A_t=e]$ is the expected discounted value of choosing effort $e$ in state $e_c$, and then following policy $\pi$ thereafter.
%\end{itemize}

\subsection{Optimal Effort Under a Fixed Public Standard}
\label{sec:static-public}
%Suppose there is no drift and no backlash when harm occurs...
In the first analysis, we assume no downward drift of the public standard's required effort level and no backlash if harm occurs; i.e., the effort required by the public standard remains fixed at $e_c$. Under this negligence standard, the platform is only subject to ex ante regulation via regulatory audits, and not penalized ex post if harm occurs.

%Introduce ex ante monitoring params...
\graf{Ex ante regulation}
Let $r \in [0,1]$ be the probability that the regulator conducts an audit of the platform's effort and let $P_f(e  \mid  e_c) \in [0,1]$ be the probability that the platform fails its audit if it exerts effort $e$, given the current required effort $e_c$. If the platform fails the audit, it is liable for negligence fines $F$. Thus, assuming risk-neutrality, the platform's expected utility under the regulatory regime of ex ante negligence is,
\begin{equation}
    \label{eq:fb-ex-ante-reg}
    EU(e \mid e_c) = -c(e) -rP_f(e \mid e_c)F.
\end{equation}

\begin{definition}
\label{def:adequate-effort}
The adequate level of effort $e$ is the point beyond which the probability of failing the audit $P_f(e  \mid  e_c)=0$, where $e_c$ is the effort as prescribed by the public standard.
\end{definition}

%% Individually-optimal effort, and claim e_c is adequate
%Thus, by definition, $P_f(e  \mid  e_c)=0$ for all $e \ge e_c$: the platform never fails its audit by at least following the public model's prescribed effort (full compliance). Further note that for any given $e_c$ and fine structure $F$, there exists an individually-optimal level of effort that maximizes (\ref{eq:fb-ex-ante-reg}). By inspection, this individually-optimal level will never exceed $e_c$, irrespective of how large the size of fines $F$ is.

%% Elucidating more on what's considered adequate effort
There exists some level of effort $e$ that guarantees that the platform will not incur any fines for negligence; $e$ will therefore be considered adequate from the perspective of the regulator, since the platform not incurring fines implies that it does not fail the audit by exerting effort $e$. Note that $e_c$ is automatically deemed adequate because it is specified by the regulator. Thus, by definition, $P_f(e  \mid  e_c)=0$ for all $e \ge e_c$: the platform can guarantee that it will not incur fines by exerting at least effort $e_c$ in state $e_c$ for all $e_c \in S$, and so the platform never fails its audit by fully complying with the explicit public standard.

%% Intro to warm up claim and intuition for the proof
By inspection, it is trivial to see that the platform's individually-optimal level of effort will never exceed $e_c$ for all $e_c \in S$, irrespective of how large the size of negligence fines $F$ is: for any two adequate effort levels, the platform will exercise lower effort because that will maximize (\ref{eq:fb-ex-ante-reg}). The following proposition formalizes this claim.

\begin{restatable}{prop}{secondprop}
\label{prop:bare-minimum-effort}
Given a fixed adequate effort level $e'$, there exists no fine scheme $F$ that can incentivize the platform to exert more effort than $e'$.
\end{restatable}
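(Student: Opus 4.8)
The plan is to observe that the fine $F$ enters the platform's expected utility (\ref{eq:fb-ex-ante-reg}) \emph{only} through the failure probability $P_f(e \mid e_c)$, and that this probability is identically zero on the whole range of efforts at or above the adequate level. First I would invoke Definition~\ref{def:adequate-effort}: since $e'$ is an adequate effort level, $P_f(e \mid e_c) = 0$ for every $e \ge e'$. Substituting this into (\ref{eq:fb-ex-ante-reg}) yields $EU(e \mid e_c) = -c(e)$ for all $e \ge e'$, an expression that does not depend on $F$ (nor on the audit probability $r$), regardless of how the fine scheme is structured.

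The second step compares efforts within this range using the assumed monotonicity of the cost function. Because $c'(e) > 0$, the map $e \mapsto -c(e)$ is strictly decreasing; hence for any $e > e'$ we have $EU(e \mid e_c) = -c(e) < -c(e') = EU(e' \mid e_c)$. A risk-neutral, utility-maximizing platform therefore strictly prefers exerting exactly $e'$ to exerting any higher level, and this holds for every choice of $F$. Consequently no fine scheme can make an effort level strictly above $e'$ a best response, which is the claim; and since the inequality against $e'$ itself is strict, no higher effort can even be tied for optimal.

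The only point that needs care is to avoid implicitly assuming $F$ is a scalar or otherwise ``well-behaved,'' but this is a non-issue: the fine term $r P_f(e \mid e_c) F$ vanishes for all $e \ge e'$ no matter the form of $F$ --- constant, a schedule depending on $e_c$, or anything else --- so the comparison among efforts at or above $e'$ is untouched by the design of $F$. The result is thus driven entirely by the interaction of Definition~\ref{def:adequate-effort} (fines never bite once the platform meets the adequate standard) with strict monotonicity of $c$, and there is no substantive obstacle; the proof is short. The conceptual weight of the statement lies in the contrast it sets up with the dynamic setting analyzed afterward, where a harmful event can shift $e_c$ upward and thereby render marginal effort above the current standard payoff-relevant.
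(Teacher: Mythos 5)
Your proposal is correct and uses essentially the same argument as the paper: both rely on $P_f(e \mid e_c) = 0$ for all $e \ge e'$ (so the fine term vanishes regardless of $F$) combined with strict monotonicity of $c$, the only difference being that you argue directly while the paper phrases it as a contradiction. No gaps.
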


\textbf{All proofs are deferred to the appendix.}

%% Proof sketch for static case 
%This result trivially follows from the specified conditions: for any two adequate effort levels, the platform will pick lower effort because that will maximize (\ref{eq:fb-ex-ante-reg}). Thus, with the static public model, no amount of fines solely based on ex ante regulation, no matter how large, can induce the platform to exercise more effort than the public model's specified $e_c$. Only if $e_c = e^*$, and if the regulator can guarantee full compliance with the public model, can this scheme incentivize socially-optimal effort.

%% Implications of this result
If the public standard remains fixed at $e_c$, no amount of fines solely based on ex ante negligence regulation, no matter how large, can induce the platform to exercise more than $e_c$ effort in state $e_c$. Only if $e_c = e^*$, and if the regulator can guarantee full compliance with the public standard, can this scheme incentivize socially-optimal effort. Notice, however, that in reality the public standard for content will weaken over time due to performative prediction. Thus, what is considered adequate effort by the regulator will change for different states of the MDP, corresponding to different values of $e_c$; consequently, even if the regulator-specified $e_c$ magically happens to equal $e^*$, it is not bound to stay at $e^*$ indefinitely. Clearly then, regulation only via ex ante negligence will fail to induce the platform to exert optimal effort. The subsequent analysis illustrates how the presence of a public backlash incentivizes the platform to exert more effort than explicitly required by the regulator.

\subsection{Optimal Effort Under an Adaptive Public Standard}

We now consider an adaptive MDP setup. The current state represents the required level of effort $e_c$; if no harm occurs, the required effort reduces over time due to performativity; and if harm does occur, then the required effort increases to $e_h$, representing public backlash.

\begin{assump}
\label{assump:suff-fines}
Given a fixed fine structure $F$ and the effort required by the public standard $e_c$, the platform's individually-optimal effort level is at least $e_c$.
\end{assump}

This assumption is without loss of generality as we will label the states of the MDP according to the platform's individually-optimal static effort.\footnote{The platform will thus never exert less than $e_c$ effort in state $e_c$, but we will see that it will sometimes exert more.} Essentially, the goal is to determine if the platform can be induced to exert more than $e_c$ effort in state $e_c$.

\begin{assump}
\label{assump:e_h-transition}
At state $e_c$, the transition probability to the high effort state $e_h > e_c$ is simply $P_e(e_c, e_h) = h(e)$, the probability that harm occurs given the platform exercises effort $e$.
\end{assump}

Note that the harm probability and thus the transition to state $e_h$ only depends on the platform's effort $e$, and not on the state $e_c$. This transition encodes the public backlash.

\begin{definition}
\label{def:next-lower-state}
The next state with required effort lower than $e_c$ is $\chi(e_c) = \sup \{ s \in S \mid s < e_c \}$.
\end{definition}

% Assume continuous drift capturing the decay of the public model
\begin{assump}
\label{assump:cont-drift}
If harm does not occur, we assume a continuous downward drift of the public standard's prescribed effort --- that is, the effort either lowers to $\chi(e_c)$ with drift probability $g(e_c)$, or stays fixed at $e_c$ with probability $1-g(e_c)$.
\end{assump}

Note that the drift probability to state $\chi(e_c)$ only depends on the current state $e_c$, and not the platform's effort $e$, conditional on the harm's not occurring.
%, which is in contrast to the transition to $e_h$.
The decrease in effort encodes performativity.

%\graf{Lemmas 1}
%Introduce Lemmas supporting our result

\begin{restatable}{lemma}{firstlemma}
\label{lem:aggressive-filtering}
    Fix a state $e_c$ representing the current effort required by the public standard, and an arbitrary policy $\pi$, and let $e_h > e_c$ be the effort that the public standard will require if harm occurs.
    For all $e_2 > e_1 \ge e_c$,
    \begin{equation}
        \label{eq:thresh-preference}
        q_{\pi}(e_c, e_2) > q_{\pi}(e_c, e_1) \iff
        d(\pi, e_c) - v_{\pi}(e_h) > \frac{c(e_2) - c(e_1)}{\gamma (h(e_1) - h(e_2))},
    \end{equation}
    where $d(\pi, e_c) = g(e_c)v_{\pi}(\chi(e_c)) + (1-g(e_c))v_{\pi}(e_c)$.
\end{restatable}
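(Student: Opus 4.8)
The plan is to expand the state-action value function $q_{\pi}(e_c,\cdot)$ via the Bellman equation, using the transition structure fixed by Assumptions~\ref{assump:e_h-transition} and~\ref{assump:cont-drift} together with Definition~\ref{def:next-lower-state}, then to take the difference of the two sides and divide by a quantity I will argue is strictly positive.

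First I would write out the one-step transition from state $e_c$ under an arbitrary effort $e$: with probability $h(e)$ harm occurs and the process moves to $e_h$ (Assumption~\ref{assump:e_h-transition}); with the complementary probability $1-h(e)$ no harm occurs, and then the public model drifts down to $\chi(e_c)$ with probability $g(e_c)$ or stays at $e_c$ with probability $1-g(e_c)$ (Assumption~\ref{assump:cont-drift}). Substituting into the (discounted) definition $q_{\pi}(e_c,e) = -c(e) + \gamma\,\E[v_{\pi}(s_{t+1}) \mid A_t = e]$ and collecting the no-harm continuation values into $d(\pi,e_c) = g(e_c)v_{\pi}(\chi(e_c)) + (1-g(e_c))v_{\pi}(e_c)$ gives
\[
q_{\pi}(e_c,e) \;=\; -c(e) + \gamma\, d(\pi,e_c) - \gamma\, h(e)\bigl(d(\pi,e_c) - v_{\pi}(e_h)\bigr).
\]
Since $v_{\pi}(e_h)$, $v_{\pi}(\chi(e_c))$, and $v_{\pi}(e_c)$ are values of following the fixed policy $\pi$ and do not depend on the deviating action $e$, this is an exact identity; only the cost term $c(e)$ and the harm probability $h(e)$ vary with $e$.

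Next I would subtract the instances at $e_1$ and $e_2$, obtaining
\[
q_{\pi}(e_c,e_2) - q_{\pi}(e_c,e_1) \;=\; \bigl(c(e_1)-c(e_2)\bigr) + \gamma\bigl(h(e_1)-h(e_2)\bigr)\bigl(d(\pi,e_c)-v_{\pi}(e_h)\bigr),
\]
so that $q_{\pi}(e_c,e_2) > q_{\pi}(e_c,e_1)$ is equivalent to $\gamma\bigl(h(e_1)-h(e_2)\bigr)\bigl(d(\pi,e_c)-v_{\pi}(e_h)\bigr) > c(e_2)-c(e_1)$. The last step uses that $h$ is strictly decreasing (from $h'(e)<0$) and $e_2>e_1$, hence $h(e_1)-h(e_2)>0$, and $\gamma>0$; dividing both sides of the inequality by the strictly positive number $\gamma\bigl(h(e_1)-h(e_2)\bigr)$ preserves its direction and yields exactly~(\ref{eq:thresh-preference}).

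There is no real obstacle here: the argument is a direct algebraic manipulation of the Bellman equation. The only points needing care are bookkeeping — keeping the discount factor consistent with the statement of the lemma, correctly merging the harm branch and the two drift branches into $d(\pi,e_c)$, and recording that only strict monotonicity of $h$ (not its convexity $h''\ge 0$) is invoked, so that the divisor is strictly positive and the division is legitimate. The hypothesis $e_2 > e_1 \ge e_c$ plays no role in the computation itself; it matters only for interpreting the resulting threshold condition in light of Assumption~\ref{assump:suff-fines}.
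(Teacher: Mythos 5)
Your proof is correct and follows essentially the same route as the paper's: expand $q_{\pi}(e_c,e)$ via the Bellman equation using the harm/drift transition structure, collect the no-harm continuation into $d(\pi,e_c)$, subtract, and divide by the strictly positive quantity $\gamma\bigl(h(e_1)-h(e_2)\bigr)$. Your explicit remark that only strict monotonicity of $h$ (not convexity) is needed, and that $e_1 \ge e_c$ plays no role in the algebra, is a fair and accurate observation that the paper leaves implicit.
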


Given $e_c$, Lemma \ref{lem:aggressive-filtering} specifies the condition under which the platform's picks one effort level over another from the continuous action set $A$, expressed via the state-action value function of the MDP.

\begin{definition}
\label{def:thresh-strategy}
We call $\pi^\tau$ a \emph{threshold strategy with threshold $\tau$} if $\pi^\tau(e_c) = \max\{\tau,e_c\}$ for all $e_c \in S$.
\end{definition}

Threshold strategies form a class of policies that can induce more aggressive effort as specified by the condition in Lemma \ref{lem:aggressive-filtering}. The following results characterize important features of threshold strategies lending support to our main derivation of the platform's optimal policy in Theorem \ref{thm:form-optimal-policy}.

%Intermediate result: For all e_c below \tau, the state value function is equal.
\begin{restatable}{lemma}{secondlemma}
\label{lem:tau-states}
Given a threshold strategy $\pi^\tau$, the state value function $v_{\pi^{\tau}}(e_c)$ is fixed for all $e_c \le \tau$.
\end{restatable}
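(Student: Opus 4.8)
The key observation is that under a threshold strategy the states at or below the threshold behave identically, which lets us reduce the problem to a single scalar equation. For every $e_c \le \tau$ we have $\pi^\tau(e_c) = \max\{\tau, e_c\} = \tau$, so the platform exerts effort $\tau$ in all of these states: by definition of the reward the immediate payoff is $-c(\tau)$, and by Assumption~\ref{assump:e_h-transition} the probability of moving to $e_h$ is $h(\tau)$ --- neither quantity depending on $e_c$. By Assumption~\ref{assump:cont-drift} the only remaining transitions out of $e_c$ are to $\chi(e_c)$ with probability $(1-h(\tau))g(e_c)$ and to $e_c$ itself with probability $(1-h(\tau))(1-g(e_c))$, and since $\chi(e_c) < e_c \le \tau$ the down-drift never leaves the band $\{s \in S : s \le \tau\}$. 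Hence the policy-evaluation equations for the states in this band form a closed system that couples to the rest of the MDP only through the single value $v_{\pi^\tau}(e_h)$ (which is external to the band since $e_h > e_c$ for every state considered).

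The plan is then to guess and verify. Write $W := v_{\pi^\tau}(e_h)$ and set
\[
 V \;:=\; \frac{-c(\tau) + \gamma\, h(\tau)\, W}{1 - \gamma\bigl(1 - h(\tau)\bigr)} .
\]
I would show that the function equal to $V$ at every state of the band satisfies the Bellman equation $v_{\pi^\tau}(e_c) = -c(\tau) + \gamma\bigl(h(\tau)\,W + (1-h(\tau))\,g(e_c)\,v_{\pi^\tau}(\chi(e_c)) + (1-h(\tau))(1-g(e_c))\,v_{\pi^\tau}(e_c)\bigr)$ for each $e_c \le \tau$: substituting $V$ everywhere, the drift term collapses because $g(e_c)\,V + (1-g(e_c))\,V = V$, leaving $V = -c(\tau) + \gamma\bigl(h(\tau)W + (1-h(\tau))V\bigr)$, which holds by the definition of $V$. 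Since the policy-evaluation operator for $\pi^\tau$ on this band --- with $e_h$ treated as an exit carrying terminal value $W$ --- has a unique fixed point (for $\gamma < 1$ always, and for $\gamma = 1$ because $h(\tau) > 0$ forces the band to be left almost surely), we conclude $v_{\pi^\tau}(e_c) = V$ for all $e_c \le \tau$. Equivalently, one can argue by induction on the states of the band from the smallest upward: the base case solves the same scalar equation for the bottom state, and in the inductive step substituting $v_{\pi^\tau}(\chi(e_c)) = V$ and solving for $v_{\pi^\tau}(e_c)$ again makes $g(e_c)$ cancel.

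I expect the only real subtlety to be modelling bookkeeping rather than mathematical difficulty: one must check that $\chi(e_c)$ is well defined and lands inside the band (immediate from $\chi(e_c) < e_c \le \tau$), that $e_h$ really is external to the band (forced by $e_h > e_c$ for the relevant states), and that $V$ is well defined, i.e.\ $1 - \gamma(1-h(\tau)) > 0$, which follows from $h(\tau) \in (0,1]$ and $\gamma \le 1$ since then $1 - \gamma(1-h(\tau)) \ge \gamma\, h(\tau) > 0$. The conceptual content is precisely the cancellation of $g(e_c)$: while the public model's required effort stays at or below the threshold $\tau$, the platform's behaviour, reward, and harm risk are unchanged, and the downward drift only ever shuffles the chain between states of equal value, so it is payoff-irrelevant --- which is exactly why the value is constant across the band.
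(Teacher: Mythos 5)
Your proposal is correct and takes essentially the same approach as the paper: the paper proves the lemma by exactly the bottom-up induction you sketch as an alternative, starting from $e_0=\min S$ (where $g(e_0)=0$ pins down the scalar equation) and showing at each step that the $g(e_k)$-weighted drift term collapses, using a geometric-series unrolling of the self-loop in place of your appeal to uniqueness of the policy-evaluation fixed point. Your guess-and-verify packaging with the explicit closed form $V = \bigl(-c(\tau)+\gamma h(\tau)W\bigr)/\bigl(1-\gamma(1-h(\tau))\bigr)$ is an equivalent and somewhat cleaner presentation of the same argument, and your bookkeeping checks (the band is closed under the drift, $e_h$ is external, the denominator is positive) match the facts the paper uses implicitly.
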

Lemma \ref{lem:tau-states} fixes the reward of exerting effort at a specific threshold, thereby enabling a straightforward characterization and analysis of the platform's policy --- amid all the possible drifting states of the public standard --- by means of a stable level of effort $\tau$.

%Proposition that e_h is the worst state to be in for threshold strategies. It is important because we can guarantee, under certain conditions, that the platform can be induced to filter more aggressively.
\begin{restatable}{prop}{thirdprop}
\label{prop:e_h-worst-state}
For all threshold strategies $\pi^\tau$, we have that $v_{\pi^{\tau}}(e_h) \le v_{\pi^{\tau}}(e_c)$ holds for all $e_c \in S$.
\end{restatable}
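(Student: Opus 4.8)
The plan is to exploit the \emph{regeneration} structure of the Markov chain that $\pi^\tau$ induces: starting from any state, the process drifts weakly downward (Assumption~\ref{assump:cont-drift}) until the first harmful event, at which moment it is reset to $e_h$ (Assumption~\ref{assump:e_h-transition}). Since the required effort only ever rises, and only to $e_h$, I take $e_h=\max S$. Writing $\pi^\tau(s)=\max\{\tau,s\}$ for the effort prescribed at state $s$, we have $\pi^\tau(s)\le e_h$ everywhere, so the per-period probability of harm under $\pi^\tau$ is $h(\pi^\tau(s_t))\ge h(e_h)>0$; hence the first time $\tau_H$ at which harm occurs is a.s.\ finite. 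Let $\E_s[\cdot]$ denote expectation under $\pi^\tau$ from state $s$, and set
\[
C(s):=\E_s\!\Big[\textstyle\sum_{t=0}^{\tau_H}\gamma^t c(\pi^\tau(s_t))\Big],\qquad \Phi(s):=\E_s\!\Big[\textstyle\sum_{t=0}^{\tau_H}\gamma^t\Big],\qquad B(s):=\E_s[\gamma^{\tau_H+1}].
\]
Because $s_{\tau_H+1}=e_h$ deterministically, the strong Markov property at the stopping time $\tau_H+1$ gives $v_{\pi^\tau}(s)=-C(s)+B(s)\,v_{\pi^\tau}(e_h)$, and in particular $v_{\pi^\tau}(e_h)=-C(e_h)/(1-B(e_h))$ (well defined since $B(s)\le\gamma<1$). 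Using the elementary identity $1-B(s)=(1-\gamma)\Phi(s)$ and abbreviating $\psi(s):=C(s)/\Phi(s)$, a short computation collapses the whole claim to
\[
v_{\pi^\tau}(s)-v_{\pi^\tau}(e_h)=\Phi(s)\bigl(\psi(e_h)-\psi(s)\bigr),
\]
so, since $\Phi(s)\ge 1>0$, it suffices to prove $\psi(s)\le\psi(e_h)$ for every $s\in S$. Intuitively $\psi(s)$ is the discount-weighted average per-period effort cost paid before the next harmful event, and we must show this is largest when the process starts in the backlash state $e_h$.

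To prove $\psi(s)\le\psi(e_h)$ I would show $\psi$ is non-decreasing along $S$. Two observations do the work. First, for $t\le\tau_H$ the trajectory is weakly decreasing, $s_0=s\ge s_1\ge\cdots\ge s_{\tau_H}$ (before harm each step moves to $\chi(s_t)\le s_t$ or stays), so $\pi^\tau(s_t)\le\pi^\tau(s)$ and hence $c(\pi^\tau(s_t))\le c(\pi^\tau(s))$; averaging gives $\psi(s)\le c(\pi^\tau(s))$ for all $s$ (and on $\{s\le\tau\}$ every effort equals $\tau$, so $\psi\equiv c(\tau)$ there, consistent with Lemma~\ref{lem:tau-states}). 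Second, conditioning on the first transition and solving the resulting linear equations for $C(s)$ and $\Phi(s)$ separately yields the one-step recursion
\[
\psi(s)=\frac{c(\pi^\tau(s))+w_s\,\psi(\chi(s))}{1+w_s},\qquad w_s:=\gamma\,(1-h(\pi^\tau(s)))\,g(s)\,\Phi(\chi(s))\ge 0,
\]
i.e.\ $\psi(s)$ is a convex combination of $c(\pi^\tau(s))$ and $\psi(\chi(s))$. Since $\chi(s)<s$ implies $\pi^\tau(\chi(s))\le\pi^\tau(s)$, the first observation gives $\psi(\chi(s))\le c(\pi^\tau(\chi(s)))\le c(\pi^\tau(s))$, so $\psi(\chi(s))$ is the smaller of the two terms being averaged; hence $\psi(s)\ge\psi(\chi(s))$. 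As $e_h=\max S$, pushing this inequality up the state space gives $\psi(e_h)\ge\psi(s)$ for all $s$, completing the argument.

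The step I expect to be the main obstacle is making the regeneration decomposition airtight: arguing carefully that $\tau_H<\infty$ almost surely, that $C,\Phi,B$ are finite, and that $v_{\pi^\tau}(s)=-C(s)+B(s)v_{\pi^\tau}(e_h)$ genuinely follows from the strong Markov property at the random time $\tau_H+1$ (bounded rewards and $\gamma<1$ make the interchange of sum and expectation legitimate, but this should be stated rather than assumed). A secondary subtlety is the final ``push it up the state space'' step when $S$ is infinite: one cannot in general invoke an immediate successor of each state, so it is cleanest either to combine the one-step inequality $\psi(\chi(s))\le\psi(s)$ with the uniform bound $\psi(s)\le c(\pi^\tau(s))\le c(e_h)$ to conclude $\sup_{s\in S}\psi(s)=\psi(e_h)$ directly, or to run an induction on $S$ ordered downward from $e_h$ with Lemma~\ref{lem:tau-states} supplying the base case $\psi\equiv c(\tau)$ on $\{s\le\tau\}$.
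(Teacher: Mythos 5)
Your argument is correct, but it takes a genuinely different route from the paper's. The paper argues by contradiction directly on the Bellman equations: it takes the minimal state $e^z$ with $v_\pi(e^z) < v_\pi(e^h)$, derives the lower bound $v_\pi(e^z) > -c(e^z)/(1-\gamma)$, and then runs a downward induction from $e^{h-1}$ showing $v_\pi(e^k)\ge v_\pi(e^h)$ for all $k\ge z$, which contradicts the choice of $e^z$. You instead regenerate at the first harm time $\tau_H$, obtaining the identity $v_{\pi^\tau}(s)-v_{\pi^\tau}(e_h)=\Phi(s)\bigl(\psi(e_h)-\psi(s)\bigr)$ and reducing the claim to monotonicity of the discount-weighted average pre-harm cost $\psi$; the one-step convex-combination recursion for $\psi$ then does the work. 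I checked the decomposition ($1-B(s)=(1-\gamma)\Phi(s)$, the recursion for $\psi$, and the chain $\psi(\chi(s))\le c(\pi^\tau(\chi(s)))\le c(\pi^\tau(s))$) and it is sound; your approach buys a quantitative formula for the value gap and an interpretable reason \emph{why} $e_h$ is worst (it maximizes the average effort cost paid before the next reset), whereas the paper's argument is more elementary and stays entirely inside the discounted Bellman recursions. Two small caveats: your ``direct'' fallback for infinite $S$ does not actually work as stated --- knowing $\psi(\chi(s))\le\psi(s)$ and $\psi(s)\le c(e_h)$ does not by itself yield $\psi(s)\le\psi(e_h)$ unless the chain of $\chi$-predecessors actually reaches $e_h$ --- so you should rely on the downward induction, which is fine since the paper's state space is discrete and its own proof enumerates $S=\{e^0,\dots,e^h\}$; and the degenerate cases ($\tau\ge e_h$, where the policy is constant and all values coincide, and the bottom state where $\chi$ is undefined but $g=0$ kills the recursion's second term) deserve one sentence each.
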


%% Implication of e_h being the worst state: supports our main result in thm 1...
Proposition \ref{prop:e_h-worst-state} establishes $e_h$ as the worst state for the platform following a threshold strategy. Intuitively, because it encodes the public backlash, $e_h$ by definition is the highest effort the public standard will require and thus it must also yield the lowest expected discounted reward for the platform. Crucially, this guarantee of the lowest reward in state $e_h$ acts as the incentivizing mechanism for the platform to exert more effort than explicitly required by the public standard.

%% Introduce Thm 1 and complementary lemmas
Given these MDP dynamics of performativity and public backlash as a response to harm, we characterize the platform's individually-optimal effort policy at any state $e_c$. The following existing results support our main result in Theorem \ref{thm:form-optimal-policy}.

\begin{lemma}[\cite{boyd_vandenberghe_2004}]
\label{lem:fo-condition-convexity}
    Suppose $f$ is a differentiable function of one variable in dom($f$). Then $f$ is convex if and only if
    \begin{align*}
        f(y) - f(x) &\ge f'(x)(y-x)
    \end{align*}
    holds for all $x, y \in$ dom($f$). And analogously for strict convexity, 
    \begin{equation}
        \label{eq:strict-convexity-condition}
        f(y) - f(x) > f'(x)(y-x)
    \end{equation}
    for all $x \ne y$.
\end{lemma}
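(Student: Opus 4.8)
This is the classical first-order characterization of convexity, and the plan is to prove both implications directly from the definition, with a small extra argument for the strict case. Since the statement as quoted is one-dimensional, I can work entirely with difference quotients along the segment joining two points, which keeps everything elementary.

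For the forward direction (convexity implies the first-order inequality), I would fix $x,y$ in the domain of $f$ and, for $t\in(0,1]$, apply the definition of convexity at the point $(1-t)x+ty = x+t(y-x)$ to get $f\bigl(x+t(y-x)\bigr)\le (1-t)f(x)+tf(y)$. Rearranging and dividing by $t$ gives $\frac{f(x+t(y-x))-f(x)}{t}\le f(y)-f(x)$; letting $t\to 0^{+}$, the left-hand side tends to $f'(x)(y-x)$ by differentiability, which yields $f(y)-f(x)\ge f'(x)(y-x)$. For the reverse direction, given $x_1,x_2$ in the domain and $\theta\in[0,1]$, I would set $z=\theta x_1+(1-\theta)x_2$ and apply the hypothesized inequality twice, with base point $z$ and targets $x_1$ and $x_2$ respectively. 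Taking the $\theta$-weighted sum of the two inequalities, the terms $f'(z)(x_i-z)$ combine to $f'(z)\bigl(\theta x_1+(1-\theta)x_2-z\bigr)=0$, leaving $\theta f(x_1)+(1-\theta)f(x_2)\ge f(z)$, i.e.\ convexity.

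For strict convexity, the reverse direction carries over immediately: each of the two inequalities above becomes strict whenever $\theta\in(0,1)$ and $x_1\ne x_2$, so their weighted sum is strict. The step I expect to be the main obstacle is the strict \emph{forward} direction, because passing to the limit $t\to 0^{+}$ in the argument above degrades a strict inequality into a weak one. To recover strictness I would argue by contradiction: if $f(y)-f(x)=f'(x)(y-x)$ for some $x\ne y$, apply the already-established weak first-order inequality at $x$ to the midpoint $z=\tfrac{1}{2}(x+y)$, obtaining $f(z)\ge f(x)+\tfrac{1}{2}f'(x)(y-x)$; substituting the assumed equality gives $f(z)\ge\tfrac{1}{2}f(x)+\tfrac{1}{2}f(y)$, which contradicts strict convexity evaluated at $z$. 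This closes the equivalence. In the paper itself this lemma is simply cited from \cite{boyd_vandenberghe_2004} and used as a black box in the later proofs.
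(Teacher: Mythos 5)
Your proof is correct and is the standard textbook argument for the first-order characterization of convexity; the paper itself gives no proof of this lemma, citing it directly from \cite{boyd_vandenberghe_2004} and using it as a black box, so there is no in-paper argument to compare against. The one genuinely delicate step---that passing to the limit $t\to 0^{+}$ degrades the strict inequality to a weak one in the forward direction of the strict case---is exactly the point you flagged, and your midpoint contradiction resolves it correctly.
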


\begin{lemma}[\cite{Sutton1998}]
\label{lem:policy-improvement}
    Given a pair of deterministic policies $\pi$ and $\pi'$ such that for all states $s \in S$
    \[ q_{\pi}(s, \pi'(s)) \ge v_{\pi}(s), \]
    then $v_{\pi'}(s) \ge v_{\pi}(s)$.
\end{lemma}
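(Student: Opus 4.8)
The plan is the classical telescoping argument for policy improvement. Fix an arbitrary state $s \in S$ and start from the hypothesis $v_{\pi}(s) \le q_{\pi}(s, \pi'(s))$. I would expand the right-hand side using the definition of $q_\pi$: since $\pi'$ is deterministic, conditioning on taking action $\pi'(s)$ for one step and then following $\pi$ is exactly $-c(\pi'(s)) + \gamma\,\E\!\left[v_{\pi}(S_{t+1}) \mid S_t = s,\ A_t = \pi'(s)\right]$. The key step is then to re-apply the hypothesis \emph{inside the expectation}, at the random successor state $S_{t+1}$: because $v_{\pi}(S_{t+1}) \le q_{\pi}(S_{t+1}, \pi'(S_{t+1}))$ holds for every realization, monotonicity of expectation lets me substitute, obtaining a bound in which the first two rewards are those collected by $\pi'$ and the tail is $\gamma^2 v_\pi(S_{t+2})$.

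Iterating this substitution $n$ times yields $v_{\pi}(s) \le \E_{\pi'}\!\left[\sum_{k=0}^{n-1}\gamma^{k}(-c(A_{t+k})) + \gamma^{n} v_{\pi}(S_{t+n}) \,\middle|\, S_t = s\right]$, where the expectation is over trajectories generated by $\pi'$ from $s$. Letting $n \to \infty$, the residual term $\gamma^{n} v_{\pi}(S_{t+n})$ vanishes because $\gamma < 1$ and the per-step reward $R_e = -c(e)$ is bounded on the relevant effort range, and the partial sums converge to $v_{\pi'}(s)$ by definition of the value of $\pi'$. This gives $v_{\pi}(s) \le v_{\pi'}(s)$, and since $s$ was arbitrary, the claim follows.

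An equivalent operator-theoretic route, which I would mention as it streamlines the limiting step, is to define the policy-$\pi'$ evaluation operator $(T_{\pi'}u)(s) = -c(\pi'(s)) + \gamma\,\E[u(S_{t+1}) \mid S_t = s, A_t = \pi'(s)]$. The hypothesis is precisely $v_\pi \le T_{\pi'} v_\pi$ pointwise; since $T_{\pi'}$ is monotone and a $\gamma$-contraction in the sup-norm with unique fixed point $v_{\pi'}$, a trivial induction gives $v_\pi \le T_{\pi'}^{\,n} v_\pi$ for all $n$, and $T_{\pi'}^{\,n} v_\pi \to v_{\pi'}$. The only real obstacle is making the passage to the limit rigorous, i.e. ensuring rewards are uniformly bounded so that dominated convergence applies and the $\gamma^n$-discounted tail vanishes; under bounded $c(\cdot)$ this is routine, and indeed the whole statement is a textbook fact (the Policy Improvement Theorem of \cite{Sutton1998}), so no ideas beyond careful telescoping are needed.
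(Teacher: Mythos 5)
Your proof is correct and is exactly the standard telescoping/monotone-operator argument for the Policy Improvement Theorem; the paper does not prove this lemma itself but cites it as a textbook fact from Sutton and Barto, and your argument is precisely the proof that citation refers to. No gaps, and the care you take with the vanishing $\gamma^n$ tail under bounded costs is appropriate.
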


We now specify the platform's individually-optimal policy under the adaptive MDP setup: the public standard's prescribed effort $e_c$ increases to the high effort state $e_h$ if harm occurs; and $e_c$ decreases over time conditional on harm not occurring. The following theorem demonstrates that the optimal policy for the platform under these dynamics is to follow a threshold strategy.

\begin{restatable}{thm}{mainthm}
\label{thm:form-optimal-policy}
    The optimal strategy $\pi^*$ for the platform is a threshold strategy $\pi^* = \pi^{\hat{e}}$, with threshold
    \begin{equation}
        \label{eq:e_hat-def}
        \hat{e} = \sup\left\{e \in [0,e_h] \Mid
        q_{\pi^e}(s^{-1}(e), e) - q_{\pi^e}(e_h, \pi^e(e_h)) \ge -\frac{c'(e)}{\gamma h'(e)} \right \},
    \end{equation}
    where $s^{-1}(e) = \sup\{e_c \in S \mid e_c \le e\}$.
\end{restatable}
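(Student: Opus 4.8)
The plan is to exhibit the threshold strategy $\pi^{\hat e}$ and verify that it is \emph{greedy with respect to its own value function}; by Lemma~\ref{lem:policy-improvement} and the uniqueness of the solution of the Bellman optimality equations this makes $v_{\pi^{\hat e}}$ the optimal value function, so $\pi^{\hat e}$ is optimal, and since it is by construction a threshold strategy the theorem follows (uniqueness of the optimal policy will drop out of a strict concavity observation). The starting point is to rewrite, for any policy $\pi$ and state $e_c$, using Assumptions~\ref{assump:e_h-transition}--\ref{assump:cont-drift} and the definition of $d(\pi,e_c)$ from Lemma~\ref{lem:aggressive-filtering},
\begin{equation*}
    q_\pi(e_c,e) = -c(e) - \gamma h(e)\bigl(d(\pi,e_c) - v_\pi(e_h)\bigr) + \gamma\, d(\pi,e_c).
\end{equation*}
When $\pi=\pi^\tau$ is a threshold strategy, Proposition~\ref{prop:e_h-worst-state} gives $d(\pi^\tau,e_c)-v_{\pi^\tau}(e_h)\ge 0$, so since $c$ is strictly convex and $h$ convex, $e\mapsto q_{\pi^\tau}(e_c,e)$ is strictly concave; by Lemma~\ref{lem:fo-condition-convexity} it therefore has a unique maximizer over the feasible effort set in state $e_c$ (which lies in $[e_c,\infty)$ by Assumption~\ref{assump:suff-fines}), pinned down by the first-order condition $d(\pi^\tau,e_c)-v_{\pi^\tau}(e_h) = -c'(e)/(\gamma h'(e))$ whenever the lower constraint is slack. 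Write $\psi(e) := -c'(e)/(\gamma h'(e))$; note $\psi$ is positive and strictly increasing because $c'$ is positive and increasing and $-h'$ is positive and decreasing.

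For states $e_c\le\hat e$: by Lemma~\ref{lem:tau-states}, $v_{\pi^{\hat e}}$ equals a single constant $V(\hat e)$ on all of them, and since $\chi(e_c)<e_c\le\hat e$ we get $d(\pi^{\hat e},e_c)=V(\hat e)$ independently of $e_c$. Hence on these states $q_{\pi^{\hat e}}(e_c,\cdot)$ coincides with one function $\tilde q(e):=-c(e)+\gamma\bigl[h(e)v_{\pi^{\hat e}}(e_h)+(1-h(e))V(\hat e)\bigr]$, whose unique maximizer is $m(\hat e):=\psi^{-1}\!\bigl(V(\hat e)-v_{\pi^{\hat e}}(e_h)\bigr)$. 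The key step is to show $m(\hat e)=\hat e$, so that $\pi^{\hat e}(e_c)=\hat e$ really is the greedy action here. Running the same computation for a generic threshold $\tau$ (with $V(\tau)$ the common value of $v_{\pi^\tau}$ on states $\le\tau$) and using Lemma~\ref{lem:tau-states} together with $q_{\pi^\tau}(s^{-1}(\tau),\tau)=v_{\pi^\tau}(s^{-1}(\tau))=V(\tau)$ and $q_{\pi^\tau}(e_h,\pi^\tau(e_h))=v_{\pi^\tau}(e_h)$, one finds that the defining inequality of $\hat e$ in (\ref{eq:e_hat-def}) is exactly $V(\tau)-v_{\pi^\tau}(e_h)\ge\psi(\tau)$, i.e.\ $m(\tau)\ge\tau$. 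Since the value function of a threshold strategy is continuous in its threshold (the reward $-c(\cdot)$ and the harm probability $h(\cdot)$ governing transitions are continuous, and discounted-MDP value functions depend continuously on the reward and transition data), $\tau\mapsto m(\tau)$ is continuous, so $\{\tau\in[0,e_h]: m(\tau)\ge\tau\}$ is closed; its supremum $\hat e$ thus satisfies $m(\hat e)\ge\hat e$, while taking $\tau\downarrow\hat e$ through the complement (when $\hat e<e_h$) forces $m(\hat e)\le\hat e$. Hence $m(\hat e)=\hat e$ and the greedy condition holds at every $e_c\le\hat e$.

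For states $e_c>\hat e$, where $\pi^{\hat e}(e_c)=e_c$, I must rule out exerting any $e>e_c$. By Lemma~\ref{lem:aggressive-filtering} with $e_1=e_c$, this holds provided $d(\pi^{\hat e},e_c)-v_{\pi^{\hat e}}(e_h)\le\psi(e_c)$, since the right-hand side of (\ref{eq:thresh-preference}) converges to $\psi(e_c)$ as $e_2\downarrow e_c$ and is increasing in $e_2$ by convexity of $c$ and $h$. To obtain this I will first prove the monotonicity bound $v_{\pi^{\hat e}}(s)\le V(\hat e)$ for every state $s$: equality for $s\le\hat e$, Proposition~\ref{prop:e_h-worst-state} for $s=e_h$, and for $\hat e<s<e_h$ an induction in which solving the Bellman equation for $v_{\pi^{\hat e}}(s)$ and substituting $v_{\pi^{\hat e}}(\chi(s))\le V(\hat e)$ reduces the claim to $\tilde q(s)\le\tilde q(\hat e)=V(\hat e)$, true because $\hat e$ maximizes $\tilde q$. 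Given this bound, writing $d(\pi^{\hat e},e_c)-v_{\pi^{\hat e}}(e_h)$ as a convex combination of $v_{\pi^{\hat e}}(\chi(e_c))-v_{\pi^{\hat e}}(e_h)$ and $v_{\pi^{\hat e}}(e_c)-v_{\pi^{\hat e}}(e_h)$, both of which are at most $V(\hat e)-v_{\pi^{\hat e}}(e_h)=\psi(\hat e)\le\psi(e_c)$, yields the required inequality.

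Having verified the greedy condition at every state, $v_{\pi^{\hat e}}$ solves the Bellman optimality equations, hence equals the optimal value function and $\pi^{\hat e}$ is optimal; strict concavity of each $q_{\pi^{\hat e}}(e_c,\cdot)$ makes the greedy action unique, so $\pi^{\hat e}$ is \emph{the} optimal policy. I expect the two main obstacles to be (i) identifying $\hat e$ as the fixed point $m(\hat e)=\hat e$, which rests on continuity of threshold value functions in the threshold and on matching (\ref{eq:e_hat-def}) to the first-order condition, and (ii) the monotonicity bound $v_{\pi^{\hat e}}\le V(\hat e)$ on above-threshold states, without which Lemma~\ref{lem:aggressive-filtering} cannot be applied in the direction needed.
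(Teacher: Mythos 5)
Your proof is correct, and while it rests on the same ingredients as the paper's (the rearrangement in Lemma~\ref{lem:aggressive-filtering}, the constancy of the value function below the threshold from Lemma~\ref{lem:tau-states}, Proposition~\ref{prop:e_h-worst-state}, and the first-order convexity condition), it is organized quite differently and, in places, more cleanly. The paper argues by contradiction through four cases of profitable one-step deviations; you instead verify directly that $\pi^{\hat e}$ is greedy with respect to its own value function. Your strict-concavity observation --- that $e\mapsto q_{\pi^\tau}(e_c,e)$ is strictly concave once $d(\pi^\tau,e_c)-v_{\pi^\tau}(e_h)\ge 0$ --- collapses the paper's Cases 2 and 3 into a single first-order-condition argument via the fixed point $m(\hat e)=\hat e$, which is exactly the paper's equation asserting $q_{\pi^{\hat e}}(s^{-1}(\hat e),\hat e)-q_{\pi^{\hat e}}(e_h,e_h)=-c'(\hat e)/(\gamma h'(\hat e))$. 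More substantively, your treatment of states $e_c>\hat e$ replaces the paper's Case 4 (an intermediate claim that $\hat e$ is optimal among all threshold strategies, followed by a somewhat involved contradiction) with the monotonicity bound $v_{\pi^{\hat e}}(s)\le V(\hat e)$ for all $s$, proved by the induction you sketch (which does close: substituting the inductive hypothesis into the solved Bellman equation reduces exactly to $\tilde q(s)\le\tilde q(\hat e)$); combined with monotonicity of $\psi$ this immediately rules out over-exertion above the threshold. What your route buys is a more transparent verification of Bellman optimality and, as a bonus, uniqueness of the greedy action. Two caveats you share with the paper: the equality $m(\hat e)=\hat e$ (resp.\ the paper's appeal to the intermediate value theorem) rests on continuity of $\tau\mapsto v_{\pi^\tau}$, which neither argument actually establishes, though you at least state why it should hold; and the boundary case $\hat e=e_h$, where the approach from above is unavailable, is handled in neither write-up (it is harmless if actions are capped at $e_h$, since the constrained maximizer of the concave objective is then $e_h$ itself).
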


The result follows from the first-order condition of convexity \cite{boyd_vandenberghe_2004} and the policy improvement theorem \cite{Sutton1998}. Intuitively, the theorem statement holds because past a certain level of effort, the gain to the platform of not exerting more effort is traded off against the increased probability of transitioning to the $e_h$ state, which yields the lowest expected reward as shown in Proposition \ref{prop:e_h-worst-state}.

%Implications of Thm 1
The primary takeaway from Theorem \ref{thm:form-optimal-policy} is that the platform is incentivized to exert more aggressive effort at threshold $\hat{e}$, despite an over-time reduction of the public standard's prescribed effort $e_c$ due to the performative prediction of disinformation. Thus, the platform's optimal effort level is stable at $\hat{e}$ for all states $e_c \le \hat{e}$. The regulatory scheme that induces more aggressive effort is the ex post public backlash, i.e., when the required effort increases to $e_h$, which effectively poses as stricter future ex ante regulation as a response to a harmful event.

%% Introduce existence proof: there exists e_h s.t. e* is the threshold
This result is also important because with the correct choice of public backlash $e_h$, the platform can in theory be induced to exert the socially optimal level of effort $e^*$. We formalize this claim in the following proposition.

\begin{restatable}{prop}{existenceprop}
\label{prop:target-thresh}
For any given socially optimal level of effort $e^*$, there exists a MDP consistent with our given conditions such that the optimal policy for the platform is a threshold strategy with threshold $\tau=e^*$.
\end{restatable}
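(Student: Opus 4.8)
The plan is to construct an explicit MDP satisfying Assumptions~\ref{assump:suff-fines}--\ref{assump:cont-drift} in which the threshold strategy $\pi^{e^*}$ of Definition~\ref{def:thresh-strategy} is the (unique) optimal policy; by Theorem~\ref{thm:form-optimal-policy} this is the same as arranging $\hat e = e^*$ in~\eqref{eq:e_hat-def}. First I would fix the primitives: take any strictly decreasing, strictly convex $h$ with range in $(0,1]$ and any strictly increasing, strictly convex $c$, and \emph{define} the damage by $D := -c'(e^*)/h'(e^*) > 0$. Then $EW'(e^*) = -h'(e^*)D - c'(e^*) = 0$ by construction and $EW''(e^*) = -h''(e^*)D - c''(e^*) < 0$, so $e^*$ indeed maximizes $EW$ in~\eqref{eq:social-welfare}; this also records the identity $-c'(e^*)/(\gamma h'(e^*)) = D/\gamma$, which is exactly the right-hand side of~\eqref{eq:e_hat-def} evaluated at $e = e^*$.

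Next I would take a minimal state space $S = \{\underline e, e^*, e_h\}$ with $0 \le \underline e < e^* < e_h$ (using the convention $\chi(\underline e) = \underline e$ for the bottom state), an audit probability $r$ and fines $F$ large enough that Assumption~\ref{assump:suff-fines} holds with these labels (the relabeling noted as without loss of generality), and dynamics given by Assumptions~\ref{assump:e_h-transition}--\ref{assump:cont-drift}, leaving $\gamma \in (0,1)$, $e_h$, and the drift probabilities $g(\cdot)$ free. Under $\pi^{e^*}$ the platform plays $e^*$ in both low states and $e_h$ in state $e_h$; by Lemma~\ref{lem:tau-states} the value $V := v_{\pi^{e^*}}(\underline e) = v_{\pi^{e^*}}(e^*)$ is common to the low states, write $W := v_{\pi^{e^*}}(e_h)$, and note $W \le V$ by Proposition~\ref{prop:e_h-worst-state}. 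Solving the (linear, two-unknown) Bellman system gives $V$ and $W$ explicitly in terms of $\gamma, h(e^*), h(e_h), c(e^*), c(e_h), g(e_h)$; since $\chi(e_c) \le e^*$ for $e_c \le e^*$ one gets $d(\pi^{e^*}, e_c) = V$ there, so the quantity in~\eqref{eq:e_hat-def} at $e = e^*$ is precisely $q_{\pi^{e^*}}(e^*, e^*) - q_{\pi^{e^*}}(e_h, e_h) = V - W$, and likewise $d(\pi^{e^*}, e_c) - v_{\pi^{e^*}}(e_h) = V - W$ in Lemma~\ref{lem:aggressive-filtering} for all $e_c \le e^*$.

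The tuning step is to choose the free parameters so that $V - W = D/\gamma$. The gap $V - W$ is continuous in $(e_h, g(e_h), \gamma)$, tends to $0$ as $e_h \downarrow e^*$ (or $g(e_h) \uparrow 1$), and can be made arbitrarily large by taking $c(e_h)-c(e^*)$ large and $g(e_h)$ small with $\gamma$ near $1$ (so that $e_h$ is nearly absorbing, giving $V - W = (c(e_h)-c(e^*))/(1-\gamma+\gamma h(e^*))$ in the limit); since $h,c,\gamma$ are also at our disposal we can ensure $D/\gamma$ lies in the attainable interval, and the intermediate value theorem supplies a parameter choice with $V - W = D/\gamma$ exactly. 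For this MDP I would then verify, via Lemma~\ref{lem:aggressive-filtering} with $\pi = \pi^{e^*}$ and the identity $d(\pi^{e^*}, e_c) - v_{\pi^{e^*}}(e_h) = V - W = -c'(e^*)/(\gamma h'(e^*))$, that $\pi^{e^*}$ is greedy with respect to $v_{\pi^{e^*}}$ at every state: in the low states, strict convexity of $c$ and $h$ makes the difference quotient $(c(a)-c(e^*))/\bigl(\gamma(h(e^*)-h(a))\bigr)$ strictly exceed $D/\gamma$ for $a > e^*$ and strictly fall below it for $a < e^*$, so $e^*$ is the unique maximizer of $q_{\pi^{e^*}}(e_c,\cdot)$; in state $e_h$ one has $d(\pi^{e^*}, e_h) - W = g(e_h)(V-W) \le D/\gamma < -c'(e_h)/(\gamma h'(e_h))$ using $g(e_h) \le 1$ and that $-c'/(\gamma h')$ is strictly increasing (product of the positive increasing functions $c'$ and $1/(-h')$), and monotonicity of the difference quotient then makes $e_h$ the unique maximizer of $q_{\pi^{e^*}}(e_h,\cdot)$ among efforts $\ge e_h$ --- all that Assumption~\ref{assump:suff-fines} permits there. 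Hence $\pi^{e^*}$ satisfies the Bellman optimality equation, so it is optimal by Lemma~\ref{lem:policy-improvement}, and strict convexity makes the greedy policy --- hence the optimal policy --- unique; it is the threshold strategy with threshold $e^*$, consistently with $\hat e = e^*$ in Theorem~\ref{thm:form-optimal-policy}.

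I expect the main obstacle to be the tuning step together with the book-keeping of keeping all freely chosen objects mutually consistent: the target $-c'(e^*)/(\gamma h'(e^*))$ is itself pinned down by $h, c, \gamma$, so one must fix $h,c$, then $\gamma$, then the range over which $e_h$ and $g(e_h)$ vary, so that $D/\gamma$ genuinely lies in the attainable interval of $V - W$ --- and, because the final step wants the unique (not merely some) optimal policy, so that strict convexity is actually available; a mild perturbation of $c$ near $e_h$ removes any boundary ties.
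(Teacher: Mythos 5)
Your proposal is correct and follows essentially the same route as the paper: both reduce the claim to the existence of parameters (chiefly $e_h$) making the value gap $v_{\pi^{e^*}}(e_0) - v_{\pi^{e^*}}(e_h)$ equal to $-c'(e^*)/(\gamma h'(e^*))$, and both obtain that existence from the intermediate value theorem using that the gap vanishes as $e_h$ approaches $e^*$ and can be made large at the other extreme when $c(e_h)$ is large. Your additional direct check that $\pi^{e^*}$ is greedy with respect to its own value function essentially re-derives the case analysis in the paper's proof of Theorem~\ref{thm:form-optimal-policy}, which the paper instead invokes wholesale, and your explicit choice of $D$ to make $e^*$ socially optimal is a harmless (arguably more careful) addition.
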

The existence proof for $e_h$ directly follows from the defining constraint of the platform's optimal policy in (\ref{eq:e_hat-def}) and the continuity assumptions of the cost and harm functions. An interesting consequence of this result, however, is captured in the following proposition, where we effectively specify a strict lower bound on the public backlash as a necessary condition to induce the socially-optimal effort $e^*$.

%%Ancillary result: Show that only possible way to have the platform filter at e* is to overreact, i.e., e_h > e^* must be true. Proof sketch: if not true, if e_h is e*, stable policy is e_hat < e*; this is obviously also true if e_h < e*.
\begin{restatable}{prop}{fourthprop}
\label{prop:ancillary-result}
The platform's optimal stable effort is guaranteed to be socially suboptimal unless the public standard becomes excessive by requiring effort $e_h > e^*$ if harm occurs.
\end{restatable}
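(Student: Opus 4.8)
The plan is to show that the platform's optimal stable effort $\hat e$ from Theorem~\ref{thm:form-optimal-policy} always satisfies the strict inequality $\hat e < e_h$. Granting this, the proposition follows immediately: a socially optimal stable effort means $\hat e = e^*$, which then forces $e_h > e^*$; conversely, whenever $e_h \le e^*$ we get $\hat e < e_h \le e^*$, and since $EW$ is strictly concave ($EW'' = -h''D - c'' < 0$ because $h'' \ge 0$, $D > 0$, $c'' > 0$) the effort $\hat e$ then yields strictly less welfare than $e^*$ and is therefore socially suboptimal. Note that $\hat e \le e_h$ is automatic, since the supremum in~\eqref{eq:e_hat-def} is taken over $[0, e_h]$; the real work is to rule out equality.

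To obtain $\hat e < e_h$, I would show that the condition inside~\eqref{eq:e_hat-def} fails for every $e$ in a left-neighborhood of $e_h$. Evaluating it at the endpoint $e = e_h$: since $e_h$ is a state we have $s^{-1}(e_h) = e_h$, and $\pi^{e_h}(e_h) = \max\{e_h, e_h\} = e_h$, so the left-hand side collapses to $q_{\pi^{e_h}}(e_h, e_h) - q_{\pi^{e_h}}(e_h, e_h) = 0$, whereas the right-hand side $-c'(e_h)/(\gamma h'(e_h))$ is strictly positive by the standing assumptions $c' > 0$ and $h' < 0$ (with $\gamma > 0$); hence the condition fails at $e_h$. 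To extend this to a full left-neighborhood --- which is what licenses the conclusion about the supremum $\hat e$ --- I would note that on the interval $(\chi(e_h), e_h)$ the index $s^{-1}(e)$ is the constant $\chi(e_h)$ while $\pi^e(e_h) = e_h$, so the left-hand side of the condition there equals $q_{\pi^e}(\chi(e_h), e) - q_{\pi^e}(e_h, e_h)$, which varies continuously in $e$; and Lemma~\ref{lem:tau-states} makes it continuous up to $e_h$ as well, since $q_{\pi^{e_h}}(\chi(e_h), e_h) = v_{\pi^{e_h}}(\chi(e_h)) = v_{\pi^{e_h}}(e_h) = q_{\pi^{e_h}}(e_h, e_h)$, the middle equality being Lemma~\ref{lem:tau-states} as both $\chi(e_h)$ and $e_h$ lie weakly below the threshold $e_h$. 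Thus the left-hand side tends to $0$ and the right-hand side to a strictly positive limit as $e \uparrow e_h$, so the condition fails throughout some $(e_h - \delta, e_h]$, giving $\hat e \le e_h - \delta < e_h$.

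I expect the only genuine obstacle to be this propagation step at $e = e_h$: the map $e \mapsto s^{-1}(e)$ jumps at the state $e_h$, so one must verify the jump does not induce a jump in the left-hand side of the defining condition --- equivalently, that exerting the backlash effort $e_h$ from the next-lower state $\chi(e_h)$ yields the same continuation value under $\pi^{e_h}$ as exerting it from state $e_h$. This is exactly what Lemma~\ref{lem:tau-states} supplies, since every state reachable in one step from $\chi(e_h)$ under effort $e_h$ --- the backlash state $e_h$ itself, a downward drift to $\chi(\chi(e_h))$, or staying at $\chi(e_h)$ --- lies weakly below $e_h$. The remaining ingredients, namely the endpoint computation at $e_h$ and the sign of $-c'(e_h)/(\gamma h'(e_h))$, are routine, as is the final comparison combining $\hat e < e_h$ with $\hat e = e^*$.
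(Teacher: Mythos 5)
Your proof is correct and takes essentially the same route as the paper's: both hinge on evaluating the defining condition of $\hat e$ in~(\ref{eq:e_hat-def}) at the endpoint $e = e_h$, where the left-hand side collapses to $q_{\pi^{e_h}}(e_h,e_h) - q_{\pi^{e_h}}(e_h,e_h) = 0$ while the right-hand side $-c'(e_h)/(\gamma h'(e_h))$ is strictly positive, together with the trivial observation that $\hat e \le e_h$ because the supremum is taken over $[0,e_h]$. If anything, your version is more careful than the paper's: the paper checks the condition only at the single point $e = e_h$ and implicitly concludes that the supremum must lie strictly below $e_h$, whereas your continuity argument extending the failure of the condition to a left-neighborhood $(e_h - \delta, e_h]$ (using Lemma~\ref{lem:tau-states} to control the jump of $s^{-1}$ at $e_h$) is precisely what is needed to license that conclusion, and your strict-concavity computation for $EW$ makes explicit why $\hat e \ne e^*$ entails social suboptimality.
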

%% Implications of this final proposition: Overreactions are necessary to induce optimal effort! Can't just set e_h to e*
Our ancillary result in Proposition \ref{prop:ancillary-result} captures the counterintuitive nature of the penalty scheme according to our model: it is not sufficient to set the ex post required effort to the optimal effort $e^*$, assuming $e^*$ were known; instead, to incentivize optimal effort, the public standard must overreact and mandate excessive, suboptimal effort $e_h > e^*$ as a response to any harmful event.

\subsection{Incentivizing Socially-Optimal Effort Under a Robust Public Standard}

\graf{Description of simpler setting}
Our descriptive model places an emphasis on overreacting to harmful events in order to incentivize socially-optimal effort. But since mandating suboptimal effort via such an overreaction is undesirable, we consider a simpler problem setting: suppose that the regulator has access to the platform's proprietary model and its underlying data, which can now be used as the public standard robust to performativity. The regulator thus has knowledge of the harm function $h$. Suppose further that the societal costs of harm $D$ are also given; the only missing information is the cost function $c$, or the platform's costs of effort to mitigate disinformation.

\begin{restatable}{prop}{fifthprop}
\label{prop:impossbility-result}
There is no way of adjusting the effort $e_c$ required by the public standard, purely as a function of the harm function $h$ and the cost of damages $D$, without regard to the cost function $c$, such that the platform is always incentivized to exert the socially-optimal level of effort.
\end{restatable}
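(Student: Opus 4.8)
The plan is to argue by contradiction, exploiting the fact that the socially-optimal effort $e^*$ is pinned down jointly by $h$, $D$, \emph{and} $c$, whereas the proposed rule is only permitted to observe $h$ and $D$. Suppose there were a function $\phi$ such that setting the public model's required effort to $e_c = \phi(h,D)$ always induces the platform to exert $e^*$. In the robust-public-model regime performativity has been removed, so the public model is effectively static and the analysis of Section~\ref{sec:static-public} applies: by Proposition~\ref{prop:bare-minimum-effort} and the discussion following it, the platform's individually-optimal effort never exceeds $e_c$, and the only configuration under which this scheme elicits socially-optimal effort is $e_c = e^*$ together with full compliance. Hence the hypothesised rule would have to satisfy $\phi(h,D) = e^*$ for \emph{every} admissible triple $(h,D,c)$.

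The second step is to show this is impossible because $e^*$ genuinely depends on $c$ even with $(h,D)$ held fixed. Recall $e^* = \argmax_e\bigl(-h(e)D - c(e)\bigr)$; since $EW''(e) = -h''(e)D - c''(e) < 0$ under the standing assumptions $c'' > 0$, $h'' \ge 0$, the objective $EW$ is strictly concave, so $e^*$ is the unique solution of the first-order condition $c'(e^*) = -h'(e^*)D$. Fix any admissible $h$ and any $D > 0$, and pick two distinct interior target efforts $t_1 \ne t_2$. For each $i$ take a quadratic cost $c_i(e) = \alpha_i e + \tfrac12 \beta_i e^2$ with $\beta_i \in \bigl(0, -h'(t_i)D/t_i\bigr)$ and $\alpha_i := -h'(t_i)D - \beta_i t_i > 0$; then $c_i' > 0$ and $c_i'' = \beta_i > 0$, so each $c_i$ is admissible, and by construction the first-order condition is solved at $t_i$, so the induced socially-optimal effort is $e^*_i = t_i$. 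Thus $(h,D)$ is held fixed while the socially-optimal effort takes the two different values $t_1 \ne t_2$, so the single number $\phi(h,D)$ cannot equal both, contradicting $\phi(h,D) = e^*$ and establishing the claim.

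The main obstacle — indeed essentially the only subtle point — is ruling out that the regulator does something cleverer than requiring $e_c = e^*$ and relying on compliance: for instance, deliberately setting $e_c$ strictly above $e^*$ and then tuning the audit probability $r$ and fine $F$ in~\eqref{eq:fb-ex-ante-reg} so that the platform's \emph{interior} best response lands exactly on $e^*$. I would dispatch this by observing that any such interior optimum must satisfy $rF\,\partial_e P_f(e^* \mid e_c) = -c'(e^*)$, whose right-hand side depends on $c$; so either $e_c \le e^*$, in which case the platform's effort is capped at $e_c \le e^*$ (and strictly below $e^*$ unless $e_c = e^*$ by the monotonicity of $-c(e)$ for $e \ge e_c$), or the fine schedule required to reach $e^*$ from above is itself a function of $c$ — in both cases the design cannot be carried out ``without regard to the cost function $c$.'' The remaining steps (checking admissibility of the constructed cost functions and that $t_1,t_2$ lie in the interior of the effort space) are routine.
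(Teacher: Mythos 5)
Your proof is correct and follows essentially the same strategy as the paper's: both arguments reduce the claim to the observation that the socially-optimal effort $e^*$ genuinely varies with the cost function $c$ while $(h,D)$ is held fixed, so no rule depending only on $(h,D)$ can always hit the target; the paper simply posits two cost functions with distinct optima and runs the case analysis on required-effort adjustments, whereas you make the same point with an explicit quadratic construction and additionally close the ``interior best response above $e^*$'' loophole that the paper handles implicitly via its full-compliance assumption.
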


\graf{Implications of impossibility result}
This result shows that even if a regulator has precise control over the public standard, without knowledge of the platform's costs, there is no way to set up the public standard's effort threshold such that the platform's individually-optimal effort level is always socially optimal. Thus, since social platforms' costs of precautionary effort underpin the incentive problem, it is crucial to model these costs in more detail to better understand their incentives relating to the control of disinformation. Determining how engagement translates to money, therefore, serves as an important avenue for future exploration, as platforms risk losing out on engagement revenue with content moderation.

\section{Conclusions}
\label{sec:conclusion}
%\graf{Summarize}
%Reiterate and summarize goals and main claims:
%SL is difficult; monitoring can work; public model exists; no extra effort w/o backlash; overreactions are not irrational, but necessary
Events like the Covid-19 ``infodemic'' or the Capitol Hill riots are recent examples of the harm associated with disinformation.  There is increasing evidence that the failure of social media platforms to control the spread of disinformation is due to incentive issues rather than a lack of technical ability \cite{nyt-fb, fb-whistleblower, frank_law}.
This work provides a formal analysis of these incentive issues that adapts the standard principal-agent framework to incorporate the unique features of the domain. Our formal model, although stylized, includes what we take to be key aspects of the setting, including the performativity of disinformation classification and public backlash as a response to harmful events.  Our formal results provide insights for the effective regulation of social media platforms.

\graf{Main results}
We argue that although a strict liability standard would theoretically align the platform's incentives with those of society, it is unlikely to be practical given the difficulty of assigning responsibility for harmful events to specific instances of disinformation after the fact.
Using our formal model, we derive a number of results relating to the use of a negligence standard.
Most importantly, we show that in the absence of a public backlash to harmful events, there is no monitoring scheme that can induce the platforms to perform a socially-optimal level of control of disinformation. However, a platform can be incentivized to exert more diligent effort than explicitly required by the regulator when the public standard of effort required to mitigate disinformation --- in terms of specifying what content ought to be limited from platforms --- includes the possibility of becoming excessive, and thus socially suboptimal, in the form of an overreaction in response to a harmful event.

\graf{Explicitly call out the undesirable results; Summarise impossibility result}
Clearly these results exhibit undesirable properties. Regulation of platforms via mandating excessive content moderation is not a practical recommendation. Furthermore, our impossibility result (Proposition \ref{prop:impossbility-result}) captures another undesirable property: even if a regulator possesses the same technical expertise and resources as a social platform, there is no way to induce the platform to control disinformation adequately via our mode of ex ante negligence regulation, without knowledge of the platform's costs of content moderation efforts.
\graf{Segue to prescriptive proposal}
Despite these perhaps unenviable conclusions, our modeling exercise offers valuable insights into the incentive issues relevant to platforms' control of online disinformation. Moreover, our results provide a lens through which further regulatory prescriptions for controlling disinformation might be derived.

\graf{Punchline}
Disinformation is one of the most urgent problems facing society. But it is a problem driven by incentives as much as by technology. This work takes a first step toward explicitly modeling the incentive issues that must be accounted for by any effective solution to the problem.

%One recommendation is to push for devising mechanisms that incentivize transparency and a clearer understanding of social platforms' costs of effort related to content moderation.

\graf{Limitations}
\subsection{Future Work}
Our model makes a number of simplifying assumptions. Treating public standards as explicit implies that a platform can guarantee a given probability of escaping punishment if it conforms to an explicit standard, which is an oversimplification of reality. The assumption that the platform can perfectly tune its proprietary model to flag toxic content is also unrealistic; technical challenges, although they may not pose the main obstacle to the practical control of disinformation, are nevertheless a real issue \cite{hao_2021}. Extending the model to more richly model these aspects are important directions for future work.
%including devising a taxation mechanism to better understand platforms' costs of effort,

\graf{Binary Harm is wlog}
\paragraph{Homogeneous Harm} Another simplifying assumption of our setup the expression of the harm from disinformation as a binary event. This binary notion of harm might seem restrictive, especially because the harm from disinformation can manifest in many forms: rare events such as the Capitol Hill riots or the Pizzagate shooting are dramatic and immediately observable, in comparison to harm from the degradation of public discourse or from the spread of climate change denial or anti-vaccine propaganda, which are more subtle manifestations.

Regardless, our setup is without loss of generality: recall that our social welfare expression (Equation \ref{eq:social-welfare}) quantifies the expected societal costs of harm from disinformation, $h(e)D$ should a harmful event occur for a platform's given level of effort. This expression can be augmented to capture different types of harm: we will simply substitute our harm function with different probability distribution functions for the different kinds of harm and include the associated societal costs. This practice straightforwardly preserves our model's notion of quantifying expected harm.

\graf{Heterogenous Content}
\paragraph{Heterogeneous Content} While our notion of measuring content harmfulness via a binary harmful event is without loss of generality, it is meaningfully different to consider the heterogeneity of content in terms of how harmful a particular piece of content is and how much benefit it brings to a social platform. Our simple model of the platform's costs of effort $c(e)$ implies the homogeneity of all content with respect to the value it brings to the platform, since it suggests each content item attains the same amount of engagement from users.

Yet, in reality, just as content is not homogeneous in terms of the varying degrees of harmfulness of each item, content will also differ in the levels of user engagement attained. Therefore, modeling this heterogeneous relationship of the harm and benefit of content in future work will likely drive different conclusions. For example, with such explicit modeling, one question we might hope to answer is whether highly toxic content is more likely to produce high levels of user engagement (in the form of likes, shares, retweets, comments etc.) than less toxic content, thereby being more valuable to the platform. This will shed light on the degree to which the incentives of social platforms relating to the control of disinformation are misaligned with those of society, which, in turn, will inform the nature of any regulatory interventions required to realign these incentives.

\graf{Taxing toxicity}
\paragraph{Taxing toxicity} A \textit{Pigouvian tax} is a tax on a market transaction that generates a negative externality borne by individuals not directly involved in the transaction \cite{pig-tax}. Social platforms exhibit the precise criterion of generating negative externalities that calls for the levying of this tax: the more users a platform has, the more lucrative it is for advertisers to pay for the platform's services to target them with ads; and furthermore, the more time these users spend engaging with other users and content on the platform, the greater the opportunity for the platform to cater to the precise needs of advertisers. Thus, a platform benefits from more engagement than less, irrespective of whether such engagement is induced from harmful or benign content. But because it is only society that incurs the costs of harmful content, the idea behind taxation is to internalize the costs of toxicity to the original transaction between the platform and an advertiser.

Devising a taxation scheme requires a good harm model to measure content toxicity, using an access to data and expertise that is only available to platforms. Therefore, a mechanism designer (regulator) might instead impose taxation in a more crude manner. For example, the regulator can ask the platform report its cost function for moderating content and then tax the platform based on its report. Since the indirect costs of effort essentially capture the value of engagement, this mechanism levies a tax on user engagement on the platform more generally, rather than on the harmfulness of hosted content, which is the entity we wish to control on social platforms.

As shown by Proposition \ref{prop:impossbility-result}, the platform's costs of effort underpin the incentive problem for disinformation mitigation;
thus, any effective mechanism must in some way be responsive to these costs.
% thus, eliciting these costs from the platform will be crucial for any effective solution to the problem
Naturally, such a mechanism must also factor in incentives that might prevent the platform from misreporting its true cost function for moderating content, in the hopes of attaining a lower tax rate, for instance.

\section{Ethical Considerations}
\label{sec:ethical-concerns}
% Free speech limitation with overreaction? Facebook is a social giant and users use it differently in different countries. Content moderation has to factor in these cultural and regional differences... Auditing content as invading privacy? Might seem unfair for the platforms given the acts of harm are individual actions and that its easy to say in hindsight what content caused disinfo. Who determines the "public notion" of what should be allowed on the platform?
Regulating social media is an especially sensitive issue.
Although allowing disinformation to spread unchecked is clearly unsustainable, disinformation control always runs the risk of becoming censorship. In this work, we take the existence of a ``public standard'' of acceptable postings for granted. However, the content of this public standard is a question of societal standards that can be settled only by public debate. Similarly, we analyze the use of ``monitoring'' without specifying its exact form. A naively implemented monitoring scheme would run the risk of serious privacy violations.

\newpage
\printbibliography

\newpage
\appendix

\section{Appendix}

% Optionally include extra information (complete proofs, additional experiments and plots) in the appendix.
% This section will often be part of the supplemental material.
Here we recall our main results and include proofs omitted from the main body of the paper.

\secondprop*
\begin{proof}
    By contradiction. Suppose the platform prefers to exert effort $e > e'$. Thus, the following must hold:
    \begin{align*}
        EU(e | e_c) &> EU(e' | e_c) \\
        \iff -c(e) - rP_f(e | e_c)F &> -c(e') - rP_f(e' | e_c)F \\
    \end{align*}
    By definition, $P_f(e' | e_c) = 0$ and therefore $P_f(e | e_c) = 0$. Thus,
    \begin{alignat*}{2}
    &&-c(e) - rP_f(e | e_c)F &> -c(e') - rP_f(e' | e_c)F \\
        &\iff& -c(e) &> -c(e') \\
        &\iff& c(e') &> c(e)
    \end{alignat*}
    which does not hold for $e > e'$ because by assumption $c'(e) > 0$ for all $e$ (contradiction).
\end{proof}

\firstlemma*
\begin{proof}
At $e_c$, the state-action value function for some effort $e$ is given by:
    \begin{align*}
        q_{\pi}(e_c, e) &= \E_{\pi}[R_{t+1} + \gamma v_{\pi}(S_{t+1}) \mid S_{t}=e_c, A_{t}=e] \\
        &= \sum_{e'_c}P(e'_c \mid s=e_c, a=e)[-c(e) + \gamma v_{\pi}(e'_c)] \\
        &= -c(e) + \gamma [h(e)v_{\pi}(e_h) + (1-h(e))(g(e_c)v_{\pi}(\chi(e_c))+(1-g(e_c))v_{\pi}(e_c))]
    \end{align*}
    By substituting in $d(\pi, e_c) = g(e_c)v_{\pi}(\chi(e_c))+(1-g(e_c))v_{\pi}(e_c)$ we have:
    \begin{equation}
    \label{eq:state-action-val}
       q_{\pi}(e_c, e) = -c(e) + \gamma [h(e)v_{\pi}(e_h) + (1-h(e))d(\pi, e_c)].
    \end{equation}
    Thus, for $q_{\pi}(e_c, e_2) > q_{\pi}(e_c, e_1)$, we have:
    \begin{alignat*}{2}
        && -c(e_2) + \gamma [h(e_2)v_{\pi}(e_h) + (1-h(e_2))d(\pi, e_c)] &>\\
        && -c(e_1) + \gamma [h(e_1)v_{\pi}(e_h) + (1-h(e_1))d(\pi, e_c)]\\
        &\iff&
           -c(e_2) + \gamma [h(e_2)v_{\pi}(e_h) + d(\pi, e_c) -h(e_2)d(\pi, e_c)] &>\\
        && -c(e_1) + \gamma [h(e_1)v_{\pi}(e_h) + d(\pi, e_c) -h(e_1)d(\pi, e_c)]\\
        &\iff&
        -c(e_2) + \gamma h(e_2)v_{\pi}(e_h) + \gamma d(\pi, e_c) -\gamma h(e_2)d(\pi, e_c) &>\\
        &&
        -c(e_1) + \gamma h(e_1)v_{\pi}(e_h) + \gamma d(\pi, e_c) -\gamma h(e_1)d(\pi, e_c)\\
        &\iff&
        -c(e_2) + \gamma h(e_2)v_{\pi}(e_h) -\gamma h(e_2)d(\pi, e_c) &>\\
        &&
        -c(e_1) + \gamma h(e_1)v_{\pi}(e_h) -\gamma h(e_1)d(\pi, e_c) \\
        &\iff&
        -c(e_2) - \gamma h(e_2)(d(\pi, e_c) -v_{\pi}(e_h)) &>\\
        &&
        -c(e_1) - \gamma h(e_1)(d(\pi, e_c) -v_{\pi}(e_h)) \\
        &\iff&
        \gamma h(e_1)(d(\pi, e_c) -v_{\pi}(e_h)) - \gamma h(e_2)(d(\pi, e_c) -v_{\pi}(e_h)) &> c(e_2) - c(e_1) \\
        &\iff&
        \gamma (h(e_1) - h(e_2))(d(\pi, e_c) -v_{\pi}(e_h)) &> c(e_2) - c(e_1) \\
        &\iff&
        d(\pi, e_c) - v_{\pi}(e_h) &> \frac{c(e_2) - c(e_1)}{\gamma (h(e_1) - h(e_2))}.
        \tag*{\qedhere}
    \end{alignat*}
\end{proof}

\secondlemma*
\begin{proof}
    The state value function for some arbitrary $e_c \le \tau$ is given by,
    \begin{equation}
        \label{eq:tau-state-val}
        v_{\pi^\tau}(e_c) = -c(\tau) + \gamma [h(\tau)v_{\pi^\tau}(e_h) + (1-h(\tau))(g(e_c)v_{\pi^\tau}(\chi(e_c))+(1-g(e_c))v_{\pi^\tau}(e_c))].
    \end{equation}
    
    Note that the platform's policy specifying effort for all $e_c \le \tau$ is fixed by definition; that is, $\pi^\tau(e_c) = \tau$ for all $e_c \le \tau$. Thus, the transition to state $e_h$ is also fixed because the transition probability $h(\tau)$ is fixed. And similarly, the probability that harm does not occur is also fixed at $(1-h(\tau))$.

Let $e_0 = \min \mathcal{S}$.    
We prove inductively that $v_{\pi^\tau}(e_k) = v_{\pi^\tau}(e_0)$ for all $e_0 \le e_k \le \tau$.  The base case ($v_{\pi^\tau}(e_0) = v_{\pi^\tau}(e_0)$) is immediate.  For the inductive step, assume that
$v_{\pi^\tau}(e_{k-1}) = v_{\pi^\tau}(e_0)$.  Then
\begin{align*}
  v_{\pi^\tau}(e_k)
  &= -c(\tau) + \gamma[h(\tau)v_{\pi^\tau}(e_h) + 
  (1-h(\tau))(1-g(e_k))v_{\pi^\tau}(e_k) + (1-h(\tau))g(e_k)v_{\pi^\tau}(e_{k-1})] \\
  &= -c(\tau) + \gamma[h(\tau)v_{\pi^\tau}(e_h) + 
  (1-h(\tau))(1-g(e_k))v_{\pi^\tau}(e_k) + (1-h(\tau))g(e_k)v_{\pi^\tau}(e_0)].
\end{align*}
Thus, $v_{\pi^\tau}(e_k) = g(e_k)V_1 + (1-g(e_k))V_0$, where
\begin{align*}
  V_1
  &= -c(\tau) + \gamma[h(\tau)v_{\pi^\tau}(e_h) + 
  (1-h(\tau))v_{\pi^\tau}(e_{k-1})]\\
  &= -c(\tau) + \gamma[h(\tau)v_{\pi^\tau}(e_h) + 
  (1-h(\tau))v_{\pi^\tau}(e_0)]\\
  &= v_{\pi^\tau}(e_0)
\end{align*}
and
\begin{align*}
V_0
&= -c(\tau) + \gamma h(\tau)v_{\pi^\tau}(e_h) + 
              \gamma(1-h(\tau))v_{\pi^\tau}(e_k) \\
&= -c(\tau) + \gamma h(\tau)v_{\pi^\tau}(e_h) + \gamma(1-h(\tau))[g(e_k)V_1 + (1-g(e_k))V_0].
\end{align*}
Note that the following is also true for $V_1$:
\begin{align*}
V_1 &= -c(\tau) + \gamma[h(\tau)v_{\pi^\tau}(e_h) + 
  (1-h(\tau))v_{\pi^\tau}(e_0)] \\
  &= -c(\tau) + \gamma[h(\tau)v_{\pi^\tau}(e_h) + 
  (1-h(\tau))g(e_k)v_{\pi^\tau}(e_0) + (1-h(\tau))(1-g(e_k))v_{\pi^\tau}(e_0)]\\
  &= -c(\tau) + \gamma h(\tau)v_{\pi^\tau}(e_h) + 
  \gamma(1-h(\tau))g(e_k)v_{\pi^\tau}(e_0) + \gamma(1-h(\tau))(1-g(e_k))v_{\pi^\tau}(e_0)\\
  &= \sum_{j=0}^\infty \gamma^j(1-h(\tau))^j(1-g(e_k))^j[-c(\tau) + \gamma h(\tau)v_{\pi^\tau}(e_h) + \gamma (1-h(\tau))g(e_k)v_{\pi^\tau}(e_0)] \\
  &= v_{\pi^\tau}(e_0)
\end{align*}
for all $g(e_k) \in [0,1]$.

Thus, for $V_0$:
\begin{align*}
V_0
&= -c(\tau) + \gamma h(\tau)v_{\pi^\tau}(e_h) + 
              \gamma(1-h(\tau))[g(e_k)V_1 + (1-g(e_k))V_0]\\
&= -c(\tau) + \gamma h(\tau)v_{\pi^\tau}(e_h) + \gamma(1-h(\tau))[g(e_k)v_{\pi^\tau}(e_0) + (1-g(e_k))V_0]\\
&= -c(\tau) + \gamma h(\tau)v_{\pi^\tau}(e_h) + \gamma(1-h(\tau))g(e_k)v_{\pi^\tau}(e_0) + \gamma(1-h(\tau))(1-g(e_k))V_0\\
&= \sum_{j=0}^\infty \gamma^j(1-h(\tau))^j(1-g(e_k))^j[-c(\tau) + \gamma h(\tau)v_{\pi^\tau}(e_h) + \gamma (1-h(\tau))g(e_k)v_{\pi^\tau}(e_0)] \\
&= V_1 \\
&= v_{\pi^\tau}(e_0).
\end{align*}
% \begin{align*}
% V_0
% &= -c(\tau) + \gamma h(\tau)v_{\pi^\tau}(e_h) + 
%               \gamma(1-h(\tau))v_{\pi^\tau}(e_k)\\
% &= \sum_{j=0}^\infty \gamma^j(1-h(\tau))^j[-c(\tau) + \gamma h(\tau)v_{\pi^\tau}(e_h)]\\
% &= v_{\pi^\tau}(e_0).
% \end{align*}
But then
\begin{align*}
  v_{\pi^\tau}(e_k)
  &= g(e_k)V_1 + (1-g(e_k))V_0 \\
  &= g(e_k)v_{\pi^\tau}(e_0) + (1-g(e_k))v_{\pi^\tau}(e_0) \\
  &= v_{\pi^\tau}(e_0)
\end{align*}
for all $g(e_k) \in [0,1]$, and we are done.
\end{proof}

\thirdprop*
\begin{proof}
    For ease of notation, let $S=\{e^0, e^1, \ldots, e^h\}$ denote the set of all states with $e^0 < \cdots < e^h$, and let $\pi = \pi^\tau$ with $\tau = 0$. Note that this specification is w.l.o.g.; for $\tau > 0$, we will consider a subset of $S$ such that the first state of this subset $e^0 = \sup\{ e \in S \mid e \le \tau \}$, since from Lemma \ref{lem:tau-states} we know that the state value function for all $e \le \tau$ is fixed.
    
    Now we move on to the proof. Suppose the claim is false. Then $\{e \mid v_{\pi}(e) < v_{\pi}(e^h)\} \ne \varnothing$. Let $e^z = \min\{e \mid v_{\pi}(e) < v_{\pi}(e^h)\}$ and $d(e^j) = (1-g(e^j))v_\pi(e^j) + g(e^j)v_\pi(e^{j-1})$ for all $0 \le j \le h$.

    First, observe that
    \begin{align*}
      d(e^z) &= (1-g(e^z))v_\pi(e^z) + g(e^z)v_\pi(e^{z-1}) \\
             &\ge (1-g(e^z))v_\pi(e^z) + g(e^z)v_\pi(e^z) \\
             &= v_\pi(e^z),
    \end{align*}
    where the inequality follows from combining the assumptions $v_\pi(e^h) > v_\pi(e^z)$ with
    $v_\pi(e^{z-1}) \ge v_\pi(e^h)$, both from the definition of $e^z$.  Note that if $e^z = e^0$, then the same result holds, since $g(e^0) = 0$.
    
    It then follows that
    \begin{align*}
      v_\pi(e^z) &=   -c(e^z) + \gamma [ h(e^z)v_\pi(e^h) + (1-h(e^z))d(e^z) ] \\
             &\ge -c(e^z) + \gamma [ h(e^z)v_\pi(e^h) + (1-h(e^z))v_\pi(e^z) ] \\
             &>   -c(e^z) + \gamma [ h(e^z)v_\pi(e^z) + (1-h(e^z))v_\pi(e^z) ] \\
             &=   -c(e^z) + \gamma v_\pi(e^z) \\
             &\ge \sum_{j=0}^\infty \gamma^j(-c(e^z)).
    \end{align*}
    
    We now show inductively that $v_\pi(e^k) \ge v_\pi(e^h)$ for all $z \le k < h$.
    
    The base case is $e^{h-1}$.  Suppose the contrary that $v_\pi(e^{h-1}) < v_\pi(e^h)$.  Then we have
    \begin{align*}
      d(e^h) &= (1-g(e^h))v_\pi(e^h) + g(e^h)v_\pi(e^{h-1})\\
             &\le v_\pi(e^h), 
    \end{align*}
    because $0 \le g(e^h) \le 1$, which gives
    \begin{align*}
      v_\pi(e^h) &=   -c(e^h) + \gamma [ h(e^h)v_\pi(e^h) + (1-h(e^h))d(e^h) ] \\
             &\le -c(e^h) + \gamma [ h(e^h)v_\pi(e^h) + (1-h(e^h))v_\pi(e^h) ] \\
             &=   -c(e^h) + \gamma v_\pi(e^h) \\
             &\le \sum_{j=1}^\infty \gamma^j(-c(e^h)) \\
             &<   \sum_{j=1}^\infty \gamma^j(-c(e^z)) \\
             &<   v_\pi(e^z),
    \end{align*}
    contradicting the definition of $e^z$.

    For the inductive step, assume that $v_\pi(e^k) \ge v_\pi(e^h)$, for some $z < k < h$.  Then we show that $v_\pi(e^{k-1}) \ge v_\pi(e^h)$.  Assume not; then similarly we have
    \begin{align*}
      d(e^k) &=   (1-g(e^k))v_\pi(e^k) + g(e^k)v_\pi(e^{k-1}) \\
             &\le (1-g(e^k))v_\pi(e^k) + g(e^k)v_\pi(e^h) \\
             &\le (1-g(e^k))v_\pi(e^k) + g(e^k)v_\pi(e^k) \\
             &=   v_\pi(e^k)
    \end{align*}
    and thus
    \begin{align*}
      v_\pi(e^k) &=   -c(e^k) + \gamma [ h(e^k)v_\pi(e^h) + (1-h(e^k))d(e^k) ] \\
             &\le -c(e^k) + \gamma [ h(e^k)v_\pi(e^h) + (1-h(e^k))v_\pi(e^k) ] \\
             &\le -c(e^k) + \gamma [ h(e^k)v_\pi(e^k) + (1-h(e^k))v_\pi(e^k) ] \\
             &=   -c(e^k) + \gamma v_\pi(e^k) \\
             &\le \sum_{j=1}^\infty \gamma^j(-c(e^k)) \\
             &<   \sum_{j=1}^\infty \gamma^j(-c(e^z)) \\
             &<   v_\pi(e^z) \\
             &<   v_\pi(e^h) \\
             &\le v_\pi(e^k),
    \end{align*}
    again yielding a contradiction.
    
    Therefore, $v_\pi(e^k) \ge v_\pi(e^h)$ is true for all $z \le k < h$, which in particular implies that the initial claim $\{e \mid v_{\pi}(e) < v_{\pi}(e^h)\} \ne \varnothing$ must be false, thus completing the proof.
\end{proof}

\mainthm*
\begin{proof}
    By contradiction. Suppose $\pi^{\hat{e}}$ is suboptimal. Then by the process of policy improvement, there must exist a state $e_c$ where some effort $e \ne \pi^{\hat{e}}(e_c)$ guarantees a higher expected reward than $\pi^{\hat{e}}(e_c)$. Thus, we apply the policy improvement theorem (Lemma \ref{lem:policy-improvement}) to find any such $e_c$ where $q_{\pi^{\hat{e}}}(e_c, e) > v_{\pi^{\hat{e}}}(e_c)$ holds, which would imply that a greedy deviation from $\pi^{\hat{e}}$ exists as the better policy.
    
    \paragraph{Case 1 ($\boldsymbol{\forall e_c}$): Less aggressive effort than $\boldsymbol{e_c}$}
    The first deviation from $\pi^{\hat{e}}$ at any $e_c$ might be to exert less aggressive effort $e < e_c$. Suppose that less aggressive effort $e$ guarantees a higher expected reward than the required effort $e_c$. However, we know that lower effort than $e_c$ does not guarantee a higher expected reward for all $e_c$ because $e_c$ by definition is the platform's individually-optimal level of effort.
    Thus, we have a contradiction and this deviation does not work.
    
    \paragraph{Case 2 ($\boldsymbol{e_c \le \hat{e}}$): Less aggressive effort than $\boldsymbol{\hat{e}}$} Suppose that the platform prefers to exert less aggressive effort $e_1$ such that $e_c \le e_1 < \hat{e}$. Then $q_{\pi^{\hat{e}}}(e_c, e_1) > q_{\pi^{\hat{e}}}(e_c, \hat{e})$ must be true.
    
    Thus, $q_{\pi^{\hat{e}}}(e_c, \hat{e}) > q_{\pi^{\hat{e}}}(e_c, e_1)$ \emph{must not} be true (contrapositive); or, by substituting in equation (\ref{eq:thresh-preference}) from Lemma \ref{lem:aggressive-filtering}, the following must not hold:
    \begin{equation}
        \label{eq:e_hat-suff-condition}
        d(\pi^{\hat{e}}, e_c) - v_{\pi^{\hat{e}}}(e_h) > \frac{c(\hat{e}) - c(e_1)}{\gamma (h(e_1) - h(\hat{e}))}.
    \end{equation}
    
    From the definition in (\ref{eq:e_hat-def}), note that because $\hat{e}$ is the supremum taken over a closed interval, it satisfies the following equation (\emph{intermediate value theorem}):
    \begin{equation}
        \label{eq:e_hat}
        q_{\pi^{\hat{e}}}(s^{-1}(\hat{e}), \hat{e}) - q_{\pi^{\hat{e}}}(e_h, e_h) = -\frac{c'(\hat{e})}{\gamma h'(\hat{e})}.
    \end{equation}
    Now consider the L.H.S of (\ref{eq:e_hat-suff-condition}) and of (\ref{eq:e_hat}). Recall that $d(\pi^{\hat{e}}, e_c) = g(e_c)v_{\pi^{\hat{e}}}(\chi(e_c)) + (1-g(e_c))
    v_{\pi^{\hat{e}}}(e_c)$. Since $\pi^{\hat{e}}(e_c) = \hat{e}$ is fixed for all $e_c < \hat{e}$, the value functions $v_{\pi^{\hat{e}}}(\chi(e_c))$ and $v_{\pi^{\hat{e}}}(e_c)$ must be equal (Lemma \ref{lem:tau-states}). Thus, $d(\pi^{\hat{e}}, e_c) = v_{\pi^{\hat{e}}}(e_c)$ as $0 \le g(e_c) \le 1$. Furthermore, because $s^{-1}(\hat{e}) < \hat{e}$ by definition, $q_{\pi^{\hat{e}}}(s^{-1}(\hat{e}),\hat{e})=v_{\pi^{\hat{e}}}(e_c)$ must be true. Moreover, $v_{\pi^{\hat{e}}}(e_h) = q_{\pi^{\hat{e}}}(e_h, e_h)$ as $e_h \ge \hat{e}$. Thus, the L.H.S of (\ref{eq:e_hat-suff-condition}) and of (\ref{eq:e_hat}) are equal, or
    \begin{equation}
        \label{eq:e_hat-convenient-equivalence}
        d(\pi^{\hat{e}}, e_c) - v_{\pi^{\hat{e}}}(e_h) = q_{\pi^{\hat{e}}}(s^{-1}(\hat{e}), \hat{e}) - q_{\pi^{\hat{e}}}(e_h, e_h).
    \end{equation}
    
    Suppose that the following is true of the R.H.S of (\ref{eq:e_hat-suff-condition}) and (\ref{eq:e_hat}):
    \begin{equation}
        \label{eq:ratio-marginals}
        -\frac{c'(\hat{e})}{\gamma h'(\hat{e})} > \frac{c(\hat{e}) - c(e_1)}{\gamma (h(e_1) - h(\hat{e}))}
    \end{equation}
    Thus,
    \begin{alignat*}{2}
        &&-\frac{c'(\hat{e})}{\gamma h'(\hat{e})} &> \frac{c(\hat{e}) - c(e_1)}{\gamma (h(e_1) - h(\hat{e}))}  \\
        &\iff& -\frac{c'(\hat{e})}{h'(\hat{e})} &> \frac{c(\hat{e}) - c(e_1)}{h(e_1) - h(\hat{e})} \\
        &\iff& -\frac{c'(\hat{e})(e_1 - \hat{e})}{h'(\hat{e})(e_1 - \hat{e})} &> -\frac{c(e_1) - c(\hat{e})}{h(e_1) - h(\hat{e})} \\
        &\iff& \frac{c(e_1) - c(\hat{e})}{h(e_1) - h(\hat{e})} &> \frac{c'(\hat{e})(e_1 - \hat{e})}{h'(\hat{e})(e_1 - \hat{e})}
    \end{alignat*}
    Notice that the final inequality is always true: we know by assumption that $c$ is strictly convex ($c''(e) > 0$) and so from equation (\ref{eq:strict-convexity-condition}) in Lemma \ref{lem:fo-condition-convexity} it follows that the numerator of the L.H.S must be strictly greater than the numerator of the R.H.S, i.e., $c(e_1) - c(\hat{e}) > c'(\hat{e})(e_1 - \hat{e})$; similarly, because $h$ is convex ($h''(e) \ge 0$), the denominator of the L.H.S must be be weakly greater than the denominator of the R.H.S, i.e., $h(e_1) - h(\hat{e}) \ge h'(\hat{e})(e_1 - \hat{e})$. Since $h'(e) < 0$ and $e_1 < \hat{e}$, it follows that the L.H.S fraction overall is strictly greater (\emph{less negative}) than the R.H.S fraction (\emph{more negative}).
    
    Therefore, if (\ref{eq:ratio-marginals}) holds, then condition (\ref{eq:e_hat-suff-condition}) must also hold because:
    \begin{align*}
        q_{\pi^{\hat{e}}}(s^{-1}(\hat{e}), \hat{e}) - q_{\pi^{\hat{e}}}(e_h, e_h) &= -\frac{c'(\hat{e})}{\gamma h'(\hat{e})} \\
        \iff d(\pi^{\hat{e}}, e_c) - v_{\pi^{\hat{e}}}(e_h) &= -\frac{c'(\hat{e})}{\gamma h'(\hat{e})} \\
        &> \frac{c(\hat{e}) - c(e_1)}{\gamma (h(e_1) - h(\hat{e}))}.
    \end{align*}
    If (\ref{eq:e_hat-suff-condition}) holds, the contrapositive statement is false, and so the original statement must also be false; thus, the platform instead prefers to exactly exert effort $\hat{e}$, and no less, for all $e_c < \hat{e}$, a contradiction.
    
    \paragraph{Case 3 ($\boldsymbol{e_c \le \hat{e}}$): More aggressive effort than $\boldsymbol{\hat{e}}$} Suppose the platform prefers to exert excessive effort at some $e_2 > \hat{e}$. It follows that $q_{\pi^{\hat{e}}}(e_c, e_2) > q_{\pi^{\hat{e}}}(e_c, \hat{e})$ must hold, and so we have (Lemma \ref{lem:aggressive-filtering}):
    \begin{equation}
        \label{eq:e_hat-necc-condition}
        d(\pi^{\hat{e}}, e_c) - v_{\pi^{\hat{e}}}(e_h) > \frac{c(e_2) - c(\hat{e})}{\gamma (h(\hat{e}) - h(e_2))},
    \end{equation}
    must also hold.
    Recall from (\ref{eq:e_hat-convenient-equivalence}) that,
    \[ d(\pi^{\hat{e}}, e_c) - v_{\pi^{\hat{e}}}(e_h) = q_{\pi^{\hat{e}}}(s^{-1}(\hat{e}), \hat{e}) - q_{\pi^{\hat{e}}}(e_h, e_h). \]
    Thus,
    \begin{align*}
        d(\pi^{\hat{e}}, e_c) - v_{\pi^{\hat{e}}}(e_h) > \frac{c(e_2) - c(\hat{e})}{\gamma (h(\hat{e}) - h(e_2))} \\
        \iff q_{\pi^{\hat{e}}}(s^{-1}(\hat{e}), \hat{e}) - q_{\pi^{\hat{e}}}(e_h, e_h) > \frac{c(e_2) - c(\hat{e})}{\gamma (h(\hat{e}) - h(e_2))}
    \end{align*}
    We know from (\ref{eq:e_hat}) that,
    \[ q_{\pi^{\hat{e}}}(s^{-1}(\hat{e}), \hat{e}) - q_{\pi^{\hat{e}}}(e_h, e_h) = -\frac{c'(\hat{e})}{\gamma h'(\hat{e})}. \]
    Thus, in order to guarantee that (\ref{eq:e_hat-necc-condition}) holds, the following must be true:
    \begin{alignat*}{2}
        &&-\frac{c'(\hat{e})}{\gamma h'(\hat{e})} &> \frac{c(e_2) - c(\hat{e})}{\gamma (h(\hat{e}) - h(e_2))} \\
        &\iff& -\frac{c'(\hat{e})}{h'(\hat{e})} &> \frac{c(e_2) - c(\hat{e})}{h(\hat{e}) - h(e_2)} \\
        &\iff& -\frac{c'(\hat{e})(e_2 - \hat{e})}{h'(\hat{e})(e_2 - \hat{e})} &> -\frac{c(e_2) - c(\hat{e})}{h(e_2) - h(\hat{e})} \\
        &\iff& \frac{c(e_2) - c(\hat{e})}{h(e_2) - h(\hat{e})} &> \frac{c'(\hat{e})(e_2 - \hat{e})}{h'(\hat{e})(e_2 - \hat{e})}
    \end{alignat*}
    However, notice that this inequality \emph{does not} hold for $e_2 > \hat{e}$: since $c$ is strictly convex, we know from Lemma \ref{lem:fo-condition-convexity} that the numerator of the L.H.S is strictly greater than that of the R.H.S, i.e., $c(e_2) - c(\hat{e}) > c'(\hat{e})(e_2 - \hat{e})$; and similarly, because $h$ is convex, the denominator of the L.H.S is weakly greater than that of the R.H.S, i.e., $h(e_2) - h(\hat{e}) \ge h'(\hat{e})(e_2 - \hat{e})$. Since $h'(e) < 0$ and $\hat{e} < e_2$, it follows that the L.H.S fraction overall must be strictly smaller (\emph{more negative}) than the R.H.S fraction (\emph{less negative}), that is,
    \[ \frac{c(e_2) - c(\hat{e})}{h(e_2) - h(\hat{e})} < \frac{c'(\hat{e})(e_2 - \hat{e})}{h'(\hat{e})(e_2 - \hat{e})} \]
    must be true, a contradiction.
    
    \paragraph{Case 4 ($\boldsymbol{e_c > \hat{e}}$):
    More aggressive effort than $\boldsymbol{e_c}$} We prove an intermediate result to arrive at our contradiction for this case. We first show that $\hat{e}$ is the optimal effort threshold for all threshold strategies.
    
    Let $\tau$ be the smallest $\tau > \hat{e}$ satisfying $q_{\pi^\tau}(e,\pi^\tau(e)) \ge q_{\pi^{\hat{e}}}(e,\pi^{\hat{e}}(e))$ for all $e\in S$. Let $s^{-1}(\tau) = e_1 < \tau$. Then following the definition of $\hat{e}$ in (\ref{eq:e_hat-def}), we have
    \begin{alignat*}{2}
      &&       q_{\pi^\tau}(e_1,\tau) - v_{\pi^\tau}(e^h) &< \frac{-c'(\tau)}{\gamma h'(\tau)} \\
      &\iff&       q_{\pi^\tau}(e_1,\tau) - v_{\pi^\tau}(e^h) &< \frac{-c'(\sigma)}{\gamma h'(\sigma)}
    \end{alignat*}
    
    for $e_1 < \sigma < \tau$ and $\tau - \sigma$ sufficiently small. But since $d(\pi^{\tau},e_1) = q_{\pi^\tau}(e_1,\tau)$ (Lemma \ref{lem:tau-states}), we have
    \begin{align*}
      d(\pi^{\tau},e_1) - v_{\pi^\tau}(e^h) &< \frac{-c'(\sigma)}{\gamma h'(\sigma)} \\
                                       &< \frac{c(\tau) - c(\sigma)}{\gamma(h(\sigma)-h(\tau))},
    \end{align*}
    which implies by Lemma \ref{lem:aggressive-filtering} that $q_{\pi^\tau}(e_1,\sigma) \ge q_{\pi^\tau}(e_1,\tau)$, and hence by the policy improvement theorem, $v_{\pi^{\sigma}}(e) \ge v_{\pi^{\tau}}(e)$ for all $e \in S$, contradicting the definition of $\tau$. Hence there is no such threshold $\tau > \hat{e}$, and so $\hat{e}$ is the optimal threshold among all threshold strategies.
    
    Now suppose the platform prefers to exert more aggressive effort at $e_2 > e_c$ for some $e_c > \hat{e}$. Thus, by Lemma \ref{lem:aggressive-filtering}, the following must hold:
    
    \begin{alignat*}{2}
        &&d(\pi^{\hat{e}}, e_c) - v_{\pi^{\hat{e}}}(e_h) &> \frac{c(e_2) - c(e_c)}{\gamma (h(e_c) - h(e_2))} \\
        &\iff& g(e_c)v_{\pi^{\hat{e}}}(\chi(e_c)) + (1-g(e_c))v_{\pi^{\hat{e}}}(e_c) - v_{\pi^{\hat{e}}}(e_h) &> \frac{c(e_2) - c(e_c)}{\gamma (h(e_c) - h(e_2))}.
    \end{alignat*}
    Thus, we have:
    \begin{align*}
         g(e_c)v_{\pi^{\hat{e}}}(\chi(e_c)) + (1-g(e_c))v_{\pi^{\hat{e}}}(e_c) - v_{\pi^{\hat{e}}}(e_h) &> \frac{c(e_2) - c(e_c)}{\gamma (h(e_c) - h(e_2))} \\
        &> -\frac{c'(e_c)}{\gamma h'(e_c)} \\
        &> -\frac{c'(\hat{e})}{\gamma h'(\hat{e})} \\
        &= q_{\pi^{\hat{e}}}(s^{-1}(\hat{e}), \hat{e}) - q_{\pi^{\hat{e}}}(e_h, e_h) \\
        &= v_{\pi^{\hat{e}}}(s^{-1}(\hat{e})) - v_{\pi^{\hat{e}}}(e_h).
    \end{align*}
    Thus,
    \begin{alignat*}{2}
        &&      g(e_c)v_{\pi^{\hat{e}}}(\chi(e_c)) + (1-g(e_c))v_{\pi^{\hat{e}}}(e_c) - v_{\pi^{\hat{e}}}(e_h) &>   v_{\pi^{\hat{e}}}(s^{-1}(\hat{e})) - v_{\pi^{\hat{e}}}(e_h) \nonumber \\
        &\iff& g(e_c)v_{\pi^{\hat{e}}}(\chi(e_c)) + (1-g(e_c))v_{\pi^{\hat{e}}}(e_c) &> v_{\pi^{\hat{e}}}(s^{-1}(\hat{e})),
    \end{alignat*}
    which implies that $v_{\pi^{\hat{e}}}(e_c) > v_{\pi^{\hat{e}}}(s^{-1}(\hat{e}))$ and/or $v_{\pi^{\hat{e}}}(\chi(e_c)) > v_{\pi^{\hat{e}}}(s^{-1}(\hat{e}))$. Thus, it follows that a new threshold strategy with threshold strictly greater than $\hat{e}$ will be preferable to $\hat{e}$, since exerting more aggressive effort $e_2$ in state $e_c$ such that $e_2 > e_c > \hat{e}$ yields a better value. However, this implication contradicts our intermediate result because no threshold greater than $\hat{e}$ is optimal and we are done.
    
    The process of policy improvement must give us a strictly better policy except when the original policy is already optimal \cite{Sutton1998}.
    Since there exists no greedy deviation $e \ne \pi^{\hat{e}}(e_c)$ such that $q(e_c, e) > q(e_c, \pi^{\hat{e}}(e_c))$ is true for any $e_c$, the proposed policy $\pi^{\hat{e}}$ must be optimal, thus completing the proof.
\end{proof}

\existenceprop*
\begin{proof}
    We know from Theorem \ref{thm:form-optimal-policy} that under the specified conditions, the platform's optimal effort at any state $e_c$ is a threshold strategy with threshold $\tau = \hat{e}$. In order to induce $e^*$ as the optimal threshold, $\hat{e}$ must equal $e^*$; thus, from the defining constraint in (\ref{eq:e_hat-def}), there must exist some $e_h$ such that the following holds:
    \begin{align}
        q_{\pi^{e^*}}(s^{-1}(e^*), e^*) - q_{\pi^{e^*}}(e_h, \pi^{e^*}(e_h)) &= -\frac{c'(e^*)}{\gamma h'(e^*)} \nonumber \\
        \label{eq:e*_condition}
        \iff v_{\pi^{e^*}}(e_0) - v_{\pi^{e^*}}(e_h) &= -\frac{c'(e^*)}{\gamma h'(e^*)}.
    \end{align}
    where $e_0 = s^{-1}(e^*) \le e^*$ (by definition).
    
    Thus, we have
    \begin{align*}
        v_{\pi^{e^*}}(e_0) &= -c(e^*) + \gamma [h(e^*)v_{\pi^{e^*}}(e_h) + (1-h(e^*))v_{\pi^{e^*}}(e_0)] \\
        v_{\pi^{e^*}}(e_0) &= -c(e^*) + \gamma h(e^*)v_{\pi^{e^*}}(e_h) + \gamma (1-h(e^*))v_{\pi^{e^*}}(e_0) \\ 
        v_{\pi^{e^*}}(e_0) - \gamma (1-h(e^*))v_{\pi^{e^*}}(e_0) &= -c(e^*) + \gamma h(e^*)v_{\pi^{e^*}}(e_h) \\
        v_{\pi^{e^*}}(e_0)[1 - \gamma (1-h(e^*))] &= -c(e^*) + \gamma h(e^*)v_{\pi^{e^*}}(e_h),
    \end{align*}
    and finally
    \begin{equation}
        \label{eq:e0_equation}
        v_{\pi^{e^*}}(e_0) = \frac{-c(e^*) + \gamma h(e^*)v_{\pi^{e^*}}(e_h)}{1 - \gamma (1-h(e^*))}.
    \end{equation}
    
    By substituting (\ref{eq:e0_equation}) in (\ref{eq:e*_condition}), we have
    \begin{align*}
        v_{\pi^{e^*}}(e_0) - v_{\pi^{e^*}}(e_h) &= -\frac{c'(e^*)}{\gamma h'(e^*)} \\
        \frac{-c(e^*) + \gamma h(e^*)v_{\pi^{e^*}}(e_h)}{1 - \gamma (1-h(e^*))} - v_{\pi^{e^*}}(e_h) &= -\frac{c'(e^*)}{\gamma h'(e^*)} \\
        -c(e^*) + \gamma h(e^*)v_{\pi^{e^*}}(e_h) - (1 - \gamma (1-h(e^*)))v_{\pi^{e^*}}(e_h) &= -\frac{c'(e^*)(1 - \gamma (1-h(e^*)))}{\gamma h'(e^*)} \\
        -v_{\pi^{e^*}}(e_h)[-\gamma h(e^*) + 1 - \gamma (1-h(e^*))] &= c(e^*) -\frac{c'(e^*)(1 - \gamma (1-h(e^*)))}{\gamma h'(e^*)} \\
        -v_{\pi^{e^*}}(e_h)[-\gamma h(e^*) + 1 - \gamma + \gamma h(e^*))] &= c(e^*) -\frac{c'(e^*)(1 - \gamma (1-h(e^*)))}{\gamma h'(e^*)} \\
        -v_{\pi^{e^*}}(e_h)(1 - \gamma) &= c(e^*) -\frac{c'(e^*)(1 - \gamma (1-h(e^*)))}{\gamma h'(e^*)},
    \end{align*}
    and finally
    \begin{equation}
        \label{eq:e*_condition_final}
        -v_{\pi^{e^*}}(e_h) = (\frac{1}{1 - \gamma})(c(e^*) -\frac{c'(e^*)(1 - \gamma (1-h(e^*)))}{\gamma h'(e^*)}).
    \end{equation}
    
    We show by the \emph{intermediate value theorem} (IVT) that (\ref{eq:e*_condition_final}) holds for some $e_h \in (e_{min}, e_{max})$ where $e_{min}$ and $e_{max}$ correspond to the lowest and highest possible levels of effort, respectively.
    
    Let $G(e_h) = -v_{\pi^{e^*}}(e_h) - K$ where $K = (\frac{1}{1 - \gamma})(c(e^*) -\frac{c'(e^*)(1 - \gamma (1-h(e^*)))}{\gamma h'(e^*)})$. First, observe that $-v_{\pi^{e^*}}(e_h) \in (c(e_h),\frac{c(e_h)}{1-\gamma})$ by construction. Moreover, note that $K > 0$.
    Thus, we have the following at the lower bound of $e_h$:
    \begin{align*}
        G(e_{min}) &= -v_{\pi^{e^*}}(e_{min}) - K \\
        &< \frac{c(e_{min})}{1-\gamma} - (\frac{1}{1 - \gamma})(c(e^*) -\frac{c'(e^*)(1 - \gamma (1-h(e^*)))}{\gamma h'(e^*)}) \\
        &= \frac{c(e_{min})}{1-\gamma} - \frac{c(e^*)}{1-\gamma} + (\frac{1}{1 - \gamma})(\frac{c'(e^*)(1 - \gamma (1-h(e^*)))}{\gamma h'(e^*)}) \\
        &< 0,
    \end{align*}
    for all $e^* \ge e_{min}$.
    
    And we have the following at the upper bound of $e_h$:
    \begin{align*}
        G(e_{max}) &= -v_{\pi^{e^*}}(e_{max}) - K \\
        &> c(e_{max}) - K \\
        &> 0,
    \end{align*}
    which holds because the cost of exerting maximum possible effort, or $c(e_{max})$, is sufficiently large (by assumption).
    
    Hence, because $G(e_{min}) < 0 < G(e_{max})$, it follows by the IVT that there exists some $e_h \in (e_{min},e_{max})$ such that
    \begin{alignat*}{2}
        &&G(e_h) &= 0 \\
        &\iff& -v_{\pi^{e^*}}(e_h) - K &= 0 \\
        &\iff& -v_{\pi^{e^*}}(e_h) &= K \\
        &\iff& -v_{\pi^{e^*}}(e_h) &= (\frac{1}{1 - \gamma})(c(e^*) -\frac{c'(e^*)(1 - \gamma (1-h(e^*)))}{\gamma h'(e^*)}),
    \end{alignat*}
    and we are done.
\end{proof}

\fourthprop*
\begin{proof}
    This result directly follows from the defining condition of the platform's stable effort $\hat{e}$ in (\ref{eq:e_hat-def}). For $\hat{e}$ to equal $e^*$, the required effort $e_h$ must be strictly greater than $e^*$. By definition, $\hat{e}$ is the supremum over the closed interval $[0,e_h]$ and so if $e_h < e^*$, then $\hat{e} < e^*$ is also true.
    
    If $e_h = e^*$, then $\hat{e} < e^*$ is also true; the L.H.S of (\ref{eq:e_hat-def}) equals zero for $e = e_h$, or
    \[ q_{\pi^{e_h}}(s^{-1}(e_h), e_h) - q_{\pi^{e_h}}(e_h, \pi^{e_h}(e_h)) = 0, \]
    while the R.H.S is always positive, or \[ -\frac{c'(e_h)}{\gamma h'(e_h)} > 0, \]
    since $c'(e) > 0$ and $h'(e) < 0$ for all $e \in [0,1]$, and therefore the inequality is not satisfied.
    Thus, $e_h > e^*$ must be true in order for the platform's stable effort $\hat{e}$ to equal $e^*$.
\end{proof}

\fifthprop*
\begin{proof}
Consider the simplest possible case where we assume there exist only two possible cost functions, $c_1$ and $c_2$. Let $e^*_1$ be the socially-optimal effort induced by cost function $c_1$ and $e^*_2$ be that induced by cost function $c_2$, and let $e^*_2 > e^*_1$. Note that the $e^*_1$ and $e^*_2$ can be trivially computed from equation (\ref{eq:social-welfare}) by equating the marginal social welfare to zero.

Suppose that the actual socially-optimal effort is $e^*_1$. Note first that if the regulator sets the public standard to require effort $e_c = e^*_1$, there should not be any increase in this level of effort because a transition to some new $e_h > e_c$ will mandate excessive and therefore suboptimal effort as the platform at least follows the public standard's specified effort (by assumption). Thus, if the public standard is set to require effort $e^*_1$, we are done.

However, now suppose that the actual socially-optimal effort is $e^*_2$ and the public standard currently specifies $e^*_1$ as the required effort. Then, an increase in effort to $e^*_2$ is necessary to incentivize the platform to exert the socially-optimal effort (contradiction).

Similarly, a decrease in the effort required by the public standard does not guarantee that platform is always induced to exert the socially-optimal effort level: If $e^*_2$ is socially-optimal, then a decrease in the required effort to some $e_c < e^*_2$ does not guarantee that the platform will continue to exert effort at $e^*_2$; the platform may exert less and therefore socially suboptimal effort at $e_c$, since without increasing the required effort, there is no way for a static public standard to induce the platform to exert more than the prescribed effort as shown in Proposition \ref{prop:bare-minimum-effort}. Therefore, if the public standard is set to require $e^*_1$, the platform is no longer guaranteed to exert the socially-optimal $e^*_2$.

Thus, because no adjustment to the public standard's required effort works, the regulator needs to know whether the platform's true cost function is $c_1$ or $c_2$ to incentivize the socially-optimal effort at all times. And since there exist more than just two possible choices for the platform's actual cost function, there is no way for a regulator to guarantee that the platform exerts socially optimal effort for mitigating disinformation with any increasing or decreasing adjustments to the public standard's required effort.
\end{proof}

\end{document}